\newtheorem{theorem}{Theorem}
\newtheorem*{theorem*}{Theorem}
\newtheorem{lemma}[theorem]{Lemma}
\newcommand{\TheTitle}{Minimum Perimeter-Sum Partitions in the Plane}
\title{{\TheTitle}\thanks{A preliminary version of this paper appeared at the 33rd International Symposium on Computational Geometry (SoCG 2017).
MA is supported by the Advanced Grant DFF-0602-02499B from the Danish Council for Independent Research under the Sapere Aude research career programme.
MdB, KB, MM, and AM are supported by the Netherlands' Organisation for Scientific Research (NWO) under project no.~024.002.003, 612.001.207, 022.005025, and 612.001.118 respectively.}}
\author{
  Mikkel Abrahamsen\thanks{Department of Computer Science, University of Copenhagen,
\texttt{miab@di.ku.dk}.}
  \and
  Mark de Berg\thanks{Department of Computer Science, TU Eindhoven, the Netherlands,
\texttt{M.T.d.Berg@tue.nl}, \texttt{k.a.buchin@tue.nl}, \texttt{mehran.mehr@gmail.com}, \texttt{admehrabi@gmail.com}.}
  \and
  Kevin Buchin\footnotemark[3]
  \and
  Mehran Mehr\footnotemark[3]
  \and
  Ali D. Mehrabi\footnotemark[3]
}
\renewcommand{\leq}{\leqslant}
\renewcommand{\geq}{\geqslant}
\newcommand{\Reals}{{\Bbb R}}
\newcommand{\bd}{\partial\hspace*{0.5mm}}
\newcommand{\diam}{\mathrm{diam}}
\newcommand{\size}{\mathrm{size}}
\newcommand{\PP}{\ensuremath{\mathcal{P}}}
\newcommand{\eps}{\varepsilon}
\newcommand{\mydef}{:=}
\newcommand{\etal}{et al.}
\newcommand{\ch}{\mbox{\sc ch}}         
\newcommand{\sep}{\mathrm{sep}}
\newcommand{\csep}{c_{\sep}}
\newcommand{\myper}{\mathrm{per}}
\newcommand{\length}{\mathrm{length}}
\newcommand{\mypara}[1]{\medskip\noindent\textbf{#1}}
\newcommand{\base}{\mathrm{base}}
\newcommand{\Sbase}{S_{\base}}
\newcommand{\expand}[1]{\overline{#1}}
\newcommand{\per}[1]{\myper(#1)}
\newcommand{\perPart}[3]{\length(\bd #1(#2,#3))}
\newcommand{\dist}[2]{\lvert#1#2\rvert}
\newcommand{\distPQ}{\mathrm{dist}}
\newcommand{\ellvert}[1]{\ell^{\text{vert}}_{#1}}
\begin{document}

\maketitle

\begin{abstract}
Let $P$ be a set of $n$ points in the plane.
We consider the problem of partitioning $P$ into two subsets $P_1$ and $P_2$ such that the sum of the perimeters of $\ch(P_1)$ and $\ch(P_2)$ is minimized, where $\ch(P_i)$ denotes the convex hull of~$P_i$.
The problem was first studied by Mitchell and Wynters in 1991 who gave an $O(n^2)$ time algorithm.
Despite considerable progress on related problems, no subquadratic time algorithm for this problem was found so far.
We present an exact algorithm solving the problem in $O(n \log^2 n)$ time and a $(1+\eps)$-approximation algorithm running in $O(n + 1/\eps^2\cdot\log^2(1/\eps))$ time.
\end{abstract}



\section{Introduction}
The clustering problem is to partition a given data set into clusters (that is, subsets)
according to some measure of optimality. We are interested in clustering problems where the
data set is a set~$P$ of points in Euclidean space. Most of these clustering problems fall into
one of two categories: problems where the maximum cost of a cluster is given
and the goal is to find a clustering consisting of a minimum number of clusters,
and problems where the number of clusters is given and the goal is to find
a clustering of minimum total cost. In this paper we consider a basic problem of
the latter type, where we wish to find a bipartition~$(P_1,P_2)$ of a planar point set~$P$.
Bipartition problems are not only interesting in their own right,
but also because bipartition algorithms can form the basis of hierarchical clustering methods.

There are many possible variants of the bipartition problem on planar point sets,
which differ in how the cost of a clustering is defined.
A variant that received a lot of attention is the 2-center problem~\cite{c-99,d-84,e-97,jk-94,s-97},
where the cost of a partition~$(P_1,P_2)$ of the given point set~$P$ is
defined as the maximum of the radii of the smallest enclosing disks of~$P_1$ and~$P_2$.
Other cost functions that have been studied include the maximum diameter of the two
point sets~\cite{abky-88}                     
and the sum of the diameters~\cite{h-92};     
see also the survey by Agarwal and Sharir~\cite{as-98} for some more variants.
\medskip

A natural class of cost functions considers the size of the convex hulls $\ch(P_1)$ and $\ch(P_2)$
of the two subsets, where the size of $\ch(P_i)$ can either be defined as the area of~$\ch(P_i)$
or as the perimeter $\myper(P_i)$ of~$\ch(P_i)$. (The perimeter of~$\ch(P_i)$
is the length of the boundary~$\bd{\ch(P_i)}$.) This class of cost functions
was already studied in~1991 by Mitchell and Wynters~\cite{mw-91}.
They studied four problem variants: minimize the sum of the perimeters, the maximum of
the perimeters, the sum of the areas, or the maximum of the areas.
In three of the four variants the convex hulls $\ch(P_1)$ and
$\ch(P_2)$ in an optimal solution may intersect~\cite[full version]{mw-91}---only
in the \emph{minimum perimeter-sum problem} the optimal bipartition is guaranteed
to be a so-called \emph{line partition}, that is, a solution with disjoint convex hulls.
For each of the four variants they gave an $O(n^3)$ algorithm that uses $O(n)$ storage and that computes an optimal line partition; for all except the minimum area-maximum problem they also gave an $O(n^2)$ algorithm that uses $O(n^2)$ storage. Note that (only) for the minimum perimeter-sum problem the computed solution is an optimal bipartition.
Around the same time, the minimum-perimeter sum problem was studied for partitions
into~$k$ subsets for $k>2$; for this variant Capoyleas~\etal~\cite{crw-91}
presented an algorithm with running time $O(n^{6k})$.
Arkin \etal~\cite{akm-93} studied the same problem and gave a similar algorithm.
Very recently, Abrahamsen \etal~\cite{aabcmrrt-18} gave an algorithm for that problem running in time $O(n^{28})$, even when $k$ is part of the input.
Unless $\text{P}=\text{NP}$, this result refutes a conjecture by Arkin \etal~\cite{akm-93} that the problem is $\text{NP}$-complete.

Mitchell and Wynters mentioned the improvement of the space requirement of the quadratic-time
algorithms for the bipartition problems as an open problem, and they stated the existence of a subquadratic
algorithm for any of the four variants as the most prominent open problem.

Rokne~\etal~\cite{rww-92} made progress on the first question, by presenting an
$O(n^2\log n)$ algorithm that uses only~$O(n)$ space for the line-partition version of
each of the four problems. Devillers and Katz~\cite{dk-99} gave algorithms for the min-max variant
of the problem, both for area and perimeter, which run in $O((n+k)\log^2 n)$ time.
Here $k$ is a parameter that is only known to be in $O(n^2)$, although Devillers and Katz
suspected that $k$ is subquadratic. They also gave linear-time algorithms for these problems
when the point set~$P$ is in convex position and given in cyclic order.
Segal~\cite{s-02} proved an $\Omega(n\log n)$ lower bound for the min-max problems.
Very recently, and apparently unaware of some of the earlier work on these problems,
Bae~\etal~\cite{bcess-2016} presented an $O(n^2\log n)$ time algorithm for the
minimum-perimeter-sum problem and an $O(n^4\log n)$ time algorithm
for the minimum-area-sum problem (considering all partitions, not only line partitions).
Despite these efforts, the main question is still open: is it possible to obtain
a subquadratic algorithm for any of the four bipartition problems based
on convex-hull size?

\subsection{Our contribution}
We answer the question above affirmatively by presenting a subquadratic algorithm
for the minimum perimeter-sum bipartition problem in the plane.

As mentioned, an optimal solution $(P_1,P_2)$ to the minimum perimeter-sum
bipartition problem must be a line partition.
A straightforward algorithm would generate all $\Theta(n^2)$ line partitions
and compute the value $\myper(P_1)+\myper(P_2)$ for each of them. If the latter
is done from scratch for each partition, the resulting algorithm runs
in~$O(n^3 \log n)$ time. The algorithms by Mitchell and Wynters~\cite{mw-91}
and Rokne~\etal~\cite{rww-92} improve on this by using the fact that the
different line bipartitions can be generated in an ordered way, so that subsequent
line partitions differ in at most one point. Thus the convex hulls do not have to be recomputed
from scratch, but they can be obtained by updating the convex hulls of the previous bipartition.
To obtain a subquadratic algorithm a fundamentally new approach is necessary:
we need a strategy that generates a subquadratic number of candidate partitions,
instead of considering all line partitions. We achieve this as follows.

We start by proving that an optimal bipartition~$(P_1,P_2)$ has the following
property: either there is a set of~$O(1)$ canonical orientations such that
$P_1$ can be separated from $P_2$ by a line with a canonical orientation,
or the distance between $\ch(P_1)$ and $\ch(P_2)$ is~$\Omega(\min(\myper(P_1),\myper(P_2)))$.
There are only $O(n)$ 
bipartitions of the former type, and finding the
best among them is relatively easy. The bipartitions of the second type are
much more challenging. We show how to employ a compressed quadtree to generate a
collection of $O(n)$ canonical 5-gons---intersections of axis-parallel rectangles and
canonical halfplanes---such that the smaller of $\ch(P_1)$ and $\ch(P_2)$
(in a bipartition of the second type) is contained in one of the 5-gons.

Even though the number of such bipartitions is linear, we cannot afford
to compute their perimeters from scratch.
We therefore use the data structure of Oh and Ahn~\cite{oa-18} to quickly compute $\myper(P\cap Q)$, where $Q$ is a query canonical 5-gon.
Given a set $\mathcal O$ of $k$ orientations, Oh and Ahn described how to create a data structure using $O(nk^3\log^2 n)$ time and space to answer queries of the following type in time $O(k\log^2 n)$:
Given a convex polygon $Q$ where each edge has an orientation in $\mathcal O$, what is $\myper(P\cap Q)$?
In our case, each query polygon $Q$ is the intersection of an axis-parallel square and a canonical halfplane bounded by a line with one of $C=O(1)$ different orientations.
We therefore make $C$ different instances of the data structure, where each instance has as orientations $\mathcal O$ the two axis-parallel directions and one of the $C$ different orientations of the canonical halfplanes (i.e., $k=3$).\footnote{In a preliminary version of this paper~\cite{adkmm-17}, we described a less efficient data structure answering these queries in time $O(\log^4 n)$, resulting in the total running time $O(n\log^4 n)$.
After that Oh and Ahn~\cite{oa-18} developed a more efficient data structure that, as they already observed, can be used to speed up our algorithm.}

To sum up, our main result is an exact algorithm for the minimum perimeter-sum bipartition problem that runs in $O(n\log^2 n)$~time.
As our model of computation we use the real RAM (with the capability of taking square roots) so that we can compute the exact perimeter of a convex polygon---this is necessary to compare the costs of two competing clusterings. We furthermore make the (standard) assumption that the model of computation allows us to compute a compressed quadtree of $n$ points in $O(n\log n)$ time; see footnote~3 in Section~\ref{sec:dist}.

Besides our exact algorithm, we present a linear-time $(1+\eps)$-approximation algorithm.
Its running time is $O(n+T(1/\eps^2))=O(n + 1/\eps^2\cdot \log^2 (1/\eps))$, where $T(1/\eps^2)$ is the running time of an exact algorithm on an instance of size $1/\eps^2$.

\section{The exact algorithm}\label{se:exact-alg}
In this section we present an exact algorithm for the minimum-perimeter-sum partition
problem.  We first prove a separation property that an optimal solution must
satisfy, and then we show how to use this property to develop a fast algorithm.

Let $P$ be the set of $n$ points in the plane for which we want to solve the
minimum-perimeter-sum partition problem. An optimal
partition $(P_1,P_2)$ of~$P$ has the following two basic properties:
$P_1$ and $P_2$ are non-empty, and the convex hulls $\ch(P_1)$ and $\ch(P_2)$
are disjoint~\cite[full version]{mw-91}. In the remainder, whenever we talk about a partition of $P$,
we refer to a partition with these two properties.

\subsection{Geometric properties of an optimal partition}\label{subse:geometric}
Consider a partition $(P_1,P_2)$ of~$P$. Define
$\PP_1\mydef\ch(P_1)$ and $\PP_2\mydef\ch(P_2)$ to be the convex hulls of $P_1$
and $P_2$, respectively, and let $\ell_1$ and $\ell_2$ be the two inner common tangents of~$\PP_1$ and~$\PP_2$.
The lines $\ell_1$ and $\ell_2$ define four wedges: one containing $P_1$,
one containing~$P_2$, and two empty wedges. We call the opening angle of the
empty wedges the \emph{separation angle} of $P_1$ and $P_2$. Furthermore, we
call the distance between $\PP_1$ and $\PP_2$ the
\emph{separation distance} of $P_1$ and~$P_2$.
%
\begin{theorem}\label{th:separation-property}
Let $P$ be a set of $n$ points in the plane, and let $(P_1,P_2)$ be a partition of $P$
that minimizes $\myper(P_1) + \myper(P_2)$. Then the separation angle of $P_1$ and $P_2$
is at least $\pi/6$ or the separation distance is at least
$\csep \cdot \min(\myper(P_1),\myper(P_2))$, where $\csep\mydef 1/250$.
\end{theorem}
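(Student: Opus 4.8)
The plan is to argue by contradiction: assume the separation angle is less than $\pi/6$ and the separation distance is less than $\csep\cdot\min(\myper(P_1),\myper(P_2))$, and show that one can then find a cheaper partition, contradicting optimality. Without loss of generality assume $\myper(P_1)\leq\myper(P_2)$, so the ``small'' cluster is $P_1$. The idea is to move a single point (or a small group of points) from one side of the partition to the other, or to re-cut the point set with a better line, and to show the perimeter-sum strictly decreases. Since the two convex hulls $\PP_1$ and $\PP_2$ are disjoint, the inner common tangents $\ell_1,\ell_2$ and the empty wedges they bound are well defined; the small separation angle means $\PP_1$ and $\PP_2$ ``see'' each other through a narrow cone, and the small separation distance means they are close relative to $\myper(P_1)$.

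First I would set up the geometry: fix the two inner tangents, let $w$ be the point (or the two points, one on each hull) realizing the separation distance $d$, and observe that because the separation angle $\alpha<\pi/6$ is small, the part of $\PP_2$ that lies ``behind'' the narrow wedge is far away in terms of arc length along $\bd\PP_2$. Concretely, I would bound how much the perimeter of $\ch(P_1\cup\{q\})$ grows when we add to $P_1$ a well-chosen point $q$ of $P_2$ lying near the wedge apex, versus how much $\myper(P_2)$ shrinks when $q$ is removed from $P_2$. The key quantitative facts are: (i) adding a point at distance roughly $d$ from $\PP_1$ increases $\myper(P_1)$ by at most $O(d)$ (triangle-inequality / convex-hull monotonicity estimate, using that the new point is close), and (ii) removing an ``exposed'' vertex from $\PP_2$ — one that sticks out along the narrow direction — decreases $\myper(P_2)$ by an amount that can be lower-bounded in terms of how far that vertex protrudes. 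When the protrusion is large compared to $d$ we win immediately; when it is small, $P_2$ is itself ``thin'' near the wedge, and I would instead compare against the line partition obtained by a slightly rotated or translated separating line, using that a thin feature of perimeter-$\Theta(\myper(P_1))$ cannot be glued cheaply to $\PP_2$ through a narrow angle. The constant $\csep=1/250$ comes out of making these two regimes meet, with generous slack.

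The main obstacle I expect is case (ii): controlling the decrease in $\myper(P_2)$ when a vertex is removed, since in principle $\PP_2$ could be arranged so that no single vertex removal helps much. Handling this requires exploiting the small separation angle in an essential way — the narrow empty wedge forces $\PP_2$ to have a vertex (or a short chain of vertices) that is both close to $\PP_1$ and ``pointy'', and removing the whole short chain must save $\Omega(\myper(P_1))$ in perimeter while costing only $O(d)=O(\csep\cdot\myper(P_1))$ to absorb into $\PP_1$. Making the chain argument rigorous — bounding its total length from below using the angle, and its insertion cost into $\ch(P_1)$ from above using the distance — is the technical heart of the proof. Everything else reduces to elementary planar convex-geometry estimates (monotonicity of perimeter under convex hull, triangle inequality, and trigonometric bounds on chord versus arc in a thin wedge), chosen loosely enough that the explicit constant $1/250$ is comfortably attained.
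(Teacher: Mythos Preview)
Your proposal has a genuine gap, and it stems partly from a misreading of the geometry. A small separation angle means the \emph{empty} wedges between the inner common tangents are narrow; the wedges containing $\PP_1$ and $\PP_2$ then each have opening angle close to~$\pi$. So the hulls do not ``see each other through a narrow cone'' --- quite the opposite, they face each other across a wide front, and $\PP_2$ need not have any pointy feature near~$\PP_1$. Your key claim, that removing a short chain of vertices from $\PP_2$ near the closest-point pair must save $\Omega(\myper(P_1))$ in perimeter, is therefore unsupported: $\PP_2$ can present an essentially flat face toward $\PP_1$, and excising any short piece of it recovers only an amount comparable to that piece's own size, which has no a~priori relation to $\myper(P_1)$. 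You correctly identify this as the main obstacle, but the resolution you sketch (``the narrow empty wedge forces $\PP_2$ to be pointy'') does not hold.

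The paper's route is quite different and does not move points between the two sides at all. The only consequence of optimality it uses is the comparison with the merged hull: since a singleton partition already has cost at most $\myper(\ch(P))$, we get $\myper(\PP_1)+\myper(\PP_2)\le\myper(\ch(P))$, equivalently that the two inner boundary arcs of $\PP_1$ and $\PP_2$ (the parts facing each other, between the outer tangent contact points) are together no longer than the two outer bridges. The crucial intermediate quantity --- entirely absent from your sketch --- is the angle $\alpha$ between the \emph{outer} common tangents. Lemma~\ref{bigAngles} uses the bridge inequality to prove $\alpha+3\beta\ge\pi$, so a separation angle $\beta<\pi/6$ forces $\alpha>\pi/2$. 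Lemma~\ref{distanceLemma} then shows, again from the bridge inequality, that $\distPQ(\PP_1,\PP_2)\ge f(\alpha)\cdot\myper(\PP_1)$ for an explicit increasing function~$f$; evaluating $f(\pi/2)$ gives the constant $1/250$. Both steps are convex-geometry estimates driven by that single inequality; neither resembles a point-swap argument.
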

%
The remainder of this section is devoted to proving Theorem~\ref{th:separation-property}.
To this end let $(P_1,P_2)$ be a partition of $P$
that minimizes $\myper(P_1) + \myper(P_2)$.
Let $\ell_3$ and $\ell_4$ be the outer common tangents of
$\PP_1$ and $\PP_2$. We define~$\alpha$ to be the angle between $\ell_3$
and $\ell_4$. More precisely, if $\ell_3$ and $\ell_4$ are parallel
we define $\alpha \mydef 0$, otherwise we define $\alpha$ as the opening angle
of the wedge defined by $\ell_3$ and $\ell_4$ containing $\PP_1$ and $\PP_2$.
We denote the separation angle of $P_1$ and $P_2$ by~$\beta$; see Fig.~\ref{fig:exact}.
\begin{figure}
\begin{center}
\includegraphics{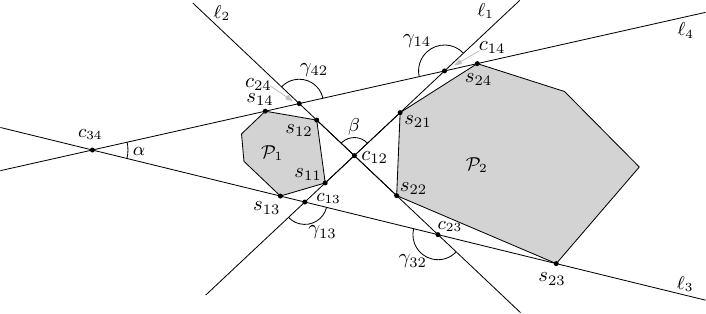}
\end{center}
\caption{The setup in the proof of Theorem~\ref{th:separation-property}.}
\label{fig:exact}
\end{figure}

The idea of the proof is as follows. Suppose that the separation
distance and the separation angle~$\beta$ are both relatively small. Then the region $A$
in between $\PP_1$ and $\PP_2$  and bounded from the bottom by $\ell_3$ and
from the top by $\ell_4$ is relatively narrow.
But then the left and right parts of $\bd A$ (which are contained in $\bd \PP_1$ and $\bd\PP_2$) would be longer than the bottom and
top parts of $\bd A$ (which are contained in $\ell_3$ and $\ell_4$),
thus contradicting the assumption that $(P_1,P_2)$ is an optimal partition.
To make this idea precise, we first prove that if the separation angle~$\beta$ is small, then the angle $\alpha$
between $\ell_3$ and $\ell_4$ must be large. Second, we show that there is a value $f(\alpha)$ such that
the distance between $\PP_1$ and $\PP_2$ is at least $f(\alpha)\cdot \min(\myper(P_1),\myper(P_2))$.
Finally we argue that this implies that if the separation angle is smaller than $\pi/6$, then (to avoid the
contradiction mentioned above) the separation distance must be relatively large. 
Next we present our proof in detail.

Let $c_{ij}$ be the intersection point between $\ell_i$ and $\ell_j$, where~$i<j$.
If $\ell_3$ and $\ell_4$ are parallel,
we choose $c_{34}$ as a point at infinity on $\ell_3$.
Assume without loss of generality that neither $\ell_1$ nor $\ell_2$
separate $\PP_1$ from $c_{34}$, and that
$\ell_3$ is the outer common tangent such that $\PP_1$ and $\PP_2$ are
to the left of $\ell_3$ when traversing $\ell_3$
from $c_{34}$ to an intersection
point in $\ell_3\cap \PP_1$.
Assume furthermore that
$c_{13}$ is closer to $c_{34}$ than $c_{23}$.

For two lines, rays, or segments
$r_1,r_2$, let $\angle(r_1,r_2)$ be the angle we need to rotate $r_1$ in
a counterclockwise direction until $r_1$ and $r_2$ are parallel.
For three points $a,b,c$, let $\angle (a,b,c)\mydef\angle (ba,bc)$.
For $i=1,2$ and $j=1,2,3,4$, let $s_{ij}$ be a point in $P_i\cap \ell_j$.
Let $\bd\PP_i$ denote the boundary of $\PP_i$ and
$\per{\PP_i}$ the perimeter of $\PP_i$. Furthermore, let $\bd\PP_i(x,y)$ denote the
portion of $\bd\PP_i$ from $x\in\bd \PP_i$ counterclockwise to $y\in\bd \PP_i$,
and $\perPart{\PP_i}xy$ denote the length of $\bd\PP_i(x,y)$.

\begin{lemma}\label{movingPoint}
Let $p_0$ and $q$ be points and $\mathbf v$ be a unit vector.
Let $p(t)\mydef p_0+t\cdot\mathbf v$ and $d(t)\mydef \dist{p(t)}{q}$
and assume that $p(t)\neq q$ for all $t\in\Reals$.
Then $d'(t)=\cos(\angle(q,p(t),p(t)+\mathbf v))$ if the points
$q,p(t),p(t)+\mathbf v$ make a left-turn
and $d'(t)=-\cos(\angle(q,p(t),p(t)+\mathbf v))$ otherwise.\footnote{Note
that $\angle(q,p(t),p(t)+\mathbf v)=\angle(q,p(t),p(t)-\mathbf v)$
by the definition of $\angle(\cdot,\cdot,\cdot)$ which is the reason
that there are two cases in the lemma.}
\end{lemma}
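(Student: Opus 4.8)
The plan is to differentiate the \emph{squared} distance (which is smooth everywhere, since $p(t)\neq q$ for all $t$) and then translate the resulting inner product into the angle notation of the lemma. Write $a\cdot b$ for the Euclidean inner product, and set $w(t)\mydef q-p(t)$, the direction vector of the ray from $p(t)$ toward $q$; by definition $|w(t)|=\dist{p(t)}{q}=d(t)$. From $d(t)^2=(p(t)-q)\cdot(p(t)-q)$ and $p'(t)=\mathbf v$ one gets $2\,d(t)\,d'(t)=2\,(p(t)-q)\cdot\mathbf v$, hence
\[
  d'(t)\ =\ \frac{(p(t)-q)\cdot\mathbf v}{d(t)}\ =\ -\,\frac{w(t)\cdot\mathbf v}{|w(t)|}.
\]
So the whole task reduces to identifying $-\,w(t)\cdot\mathbf v/|w(t)|$ with $\pm\cos\theta$, where $\theta\mydef\angle(q,p(t),p(t)+\mathbf v)\in[0,\pi)$, and pinning down the sign.

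For this I would recall that $\theta$ is, by definition of $\angle(\cdot,\cdot,\cdot)$, the smallest non-negative angle through which $w(t)$ must be rotated counterclockwise to become parallel to the direction of the ray $p(t)\to p(t)+\mathbf v$, i.e.\ parallel to $\mathbf v$. Letting $R_\theta$ denote counterclockwise rotation by $\theta$, we then have $R_\theta w(t)=\mu\,\mathbf v$ for a unique $\mu\in\Reals$, with $|\mu|=|w(t)|$ since $\mathbf v$ is a unit vector and rotations are isometries. Therefore $w(t)\cdot\mathbf v=\mu\,(R_{-\theta}\mathbf v)\cdot\mathbf v=\mu\cos\theta$, so that
\[
  d'(t)\ =\ -\,\frac{\mu\cos\theta}{|\mu|}\ =\ -\,\mathrm{sgn}(\mu)\cdot\cos\theta .
\]
It then remains to check that $\mathrm{sgn}(\mu)=-1$ on a left turn and $+1$ otherwise. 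Here $\mu>0$ holds exactly when, rotating $w(t)$ counterclockwise, we reach the direction of $\mathbf v$ strictly before that of $-\mathbf v$, i.e.\ when $\mathbf v$ lies counterclockwise from $w(t)$, equivalently (writing $a\times b\mydef a_xb_y-a_yb_x$) when $w(t)\times\mathbf v>0$. On the other hand, the triple $q,p(t),p(t)+\mathbf v$ makes a left turn precisely when the outgoing direction $\mathbf v$ lies counterclockwise from the incoming direction $p(t)-q=-w(t)$, i.e.\ when $(-w(t))\times\mathbf v>0$, i.e.\ when $w(t)\times\mathbf v<0$. Hence a left turn forces $\mu<0$ and $d'(t)=\cos\theta$, while the complementary case forces $\mu>0$ and $d'(t)=-\cos\theta$, which is exactly the claim. (When the three points are collinear one has $w(t)\parallel\mathbf v$, $\theta=0$, and the correct value $d'(t)=\pm1$ is read off directly.)

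The differentiation itself is a one-liner; the step I expect to be the main obstacle is the orientation bookkeeping in the last paragraph: lining up the counterclockwise convention built into $\angle(\cdot,\cdot,\cdot)$ with the sign of the cross product that distinguishes left turns from right turns, and keeping straight that the ray in $\angle(q,p(t),p(t)+\mathbf v)$ points from $p(t)$ toward $q$ (direction $w(t)=q-p(t)$), the negative of the vector $p(t)-q$ appearing in $d'(t)$. The footnote in the lemma statement flags precisely this subtlety, since $\angle(q,p(t),p(t)+\mathbf v)=\angle(q,p(t),p(t)-\mathbf v)$ shows that the angle alone cannot determine the sign.
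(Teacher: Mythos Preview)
Your argument is correct. The paper takes a more concrete route: it freezes a value $t_0$, reparameterises so $t_0=0$, and rotates/translates so that $p_0$ is the origin and $q=(x,0)$ with $x>0$; then, assuming $\mathbf v$ has positive $y$-coordinate (the other case being symmetric), it writes $\mathbf v=(\cos\phi,\sin\phi)$, computes $d(t)$ as an explicit radical, differentiates, and evaluates at $t=0$ to get $d'(0)=-\cos\phi$. Your approach instead differentiates $d(t)^2$ once and for all to get $d'(t)=-\,w(t)\cdot\mathbf v/|w(t)|$, and then does the orientation bookkeeping abstractly via the sign of $\mu$ and the cross product. The upside of your route is that it is coordinate-free and avoids the ``the other case is analogous'' split on the $y$-sign of $\mathbf v$; the upside of the paper's route is that the explicit computation makes the sign come out mechanically, with no need to chase the rotation/cross-product conventions you correctly flag as the delicate part. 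Both arrive at the same place with comparable effort. One small remark: the collinear case you treat parenthetically is in fact excluded by the hypothesis $p(t)\neq q$ for all $t$, since collinearity of $q,p(t),p(t)+\mathbf v$ forces $q$ to lie on the line $\{p_0+s\mathbf v\}$.
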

\begin{proof}
We prove the lemma for an arbitrary
value $t=t_0$.
By reparameterizing $p$, we may assume that $t_0=0$.
Furthermore, by changing the coordinate system, we can
without loss of generality assume that
$p_0=(0,0)$ and $q=(x,0)$ for some value $x>0$.

Let $\phi\mydef\angle((x,0),(0,0),\mathbf v)$.
Assume that $\mathbf v$ has positive $y$-coordinate---the
case that $\mathbf v$ has negative $y$-coordinate can be handled
analogously.
We have proved the lemma if we manage to show that $d'(0)=-\cos\phi$.
Note that since $\mathbf v$ has positive $y$-coordinate,
we have $p(t)=(t\cos \phi,t\sin\phi)$ for every $t\in\Reals$.
Hence
\[d(t)\ =\ \sqrt { \left( t\cos  \phi  -x \right) ^{2}+{t}^{2}
\sin^2 \phi}
\]
and
\[d'(t)\ =\ \frac {t-x\cos \phi }{\sqrt {{t}^{2}-2tx\cos \phi
+{x}^{2}}}.
\]
 Evaluating at $t=0$, we get
 \[d'(0)\ =\ -{\frac {x\cos \phi }{|x|}}=-\cos \phi,\]
 where the last equality follows since $x>0$.
\end{proof}

\begin{lemma}\label{bigAngles} We have
$\alpha+3\beta\geq\pi$.
\end{lemma}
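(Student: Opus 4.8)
I want to show $\alpha + 3\beta \geq \pi$, where $\beta$ is the separation angle (opening angle of the empty wedges between the inner tangents $\ell_1,\ell_2$) and $\alpha$ is the opening angle of the wedge between the outer tangents $\ell_3,\ell_4$ containing both convex hulls. The natural approach is to look at the (possibly unbounded) quadrilateral-like region $A$ trapped between $\PP_1$ and $\PP_2$, bounded below by the segment of $\ell_3$ between $c_{13}$ and $c_{23}$, above by the segment of $\ell_4$, and on the left and right by arcs of $\bd\PP_1$ and $\bd\PP_2$. The idea is to track how directions turn as one walks around the boundary of $A$, and read off an angle identity (or inequality) relating $\alpha$ and $\beta$ from the fact that the total turning is $2\pi$.

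First I would set up coordinates/notation carefully. I would consider the four tangent lines meeting near $c_{34}$ and the four inner points $s_{i3}, s_{i4}$ where the hulls touch the outer tangents. Walking counterclockwise around $\bd A$, I pass along $\ell_3$ from $c_{13}$-side point on $\PP_1$, turn at the "bottom-left corner" onto $\bd\PP_1$, traverse $\bd\PP_1(s_{13},s_{14})$ (turning through some amount), turn onto $\ell_4$, traverse along $\ell_4$, turn onto $\bd\PP_2$, traverse $\bd\PP_2(s_{24},s_{23})$, and turn back onto $\ell_3$ to close up. The sum of all exterior turning angles is exactly $2\pi$. The turns at the four "corners" of $A$ are governed by $\alpha$ (the corner near $c_{34}$, or its absence when $\ell_3\parallel\ell_4$) and by $\beta$ together with the local geometry of the inner tangents $\ell_1,\ell_2$: the inner tangents $\ell_1,\ell_2$ cross inside $A$, and the angle between each $\ell_j$ and the outer tangents is controlled by $\beta$ and $\alpha$. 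The convex-arc portions contribute nonnegative turning. Collecting the corner turns and bounding the arc turns below by $0$ should yield a linear inequality in $\alpha,\beta$; I expect the precise accounting to give $\alpha + 3\beta \geq \pi$ because the separation angle $\beta$ "appears three times" — once as the wedge itself and twice more as the angles the inner tangents make against the two outer tangents, each of which is at most... — this is exactly the kind of bookkeeping that produces a factor of $3$.

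More concretely, I would use the angle function $\angle(\cdot,\cdot)$ defined in the excerpt and argue via a chain of equalities/inequalities of the form $\angle(\ell_3,\ell_4) = \alpha$ (by definition), together with relations like $\angle(\ell_3,\ell_1) \leq \beta$ and $\angle(\ell_1,\ell_4)\leq \beta$ and similarly for $\ell_2$, derived from the fact that $\ell_1,\ell_2$ separate while $\ell_3,\ell_4$ are the outer tangents, so each inner tangent lies "between" the outer tangents in a way constrained by the separation wedge. Chaining these around the crossing point of $\ell_1$ and $\ell_2$ — using that the four angles around that point sum to $2\pi$, two of which equal $\beta$ — should close the argument. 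I would handle the degenerate case $\ell_3\parallel\ell_4$ ($\alpha=0$) separately or as a limiting case; there the claim becomes $3\beta\geq\pi$, i.e. $\beta\geq\pi/3$, which should follow directly since parallel outer tangents force the hulls to be "stacked" and the inner tangents to cross at a wide angle.

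**Expected main obstacle.** The delicate part is the corner-turning bookkeeping: getting the signs right at each of the (up to) four corners of $A$, correctly handling the case where a corner is "reflex" relative to $A$ or disappears entirely (parallel outer tangents, or an inner tangent passing through a vertex), and justifying rigorously that the turning along the convex arcs $\bd\PP_i(s_{i3},s_{i4})$ is nonnegative and can simply be dropped. I also need the WLOG assumptions from the setup (about which outer tangent is $\ell_3$, and $c_{13}$ closer to $c_{34}$ than $c_{23}$) to genuinely pin down the cyclic order of the six distinguished points $c_{34}, s_{13}, s_{14}, s_{24}, s_{23}$ and the crossing of $\ell_1,\ell_2$ around $\bd A$, so that the angle chain is valid in all configurations; enumerating or uniformly treating these configurations is where the real care lies.
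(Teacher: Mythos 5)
Your plan is a purely geometric angle--chasing argument (total turning around $\partial A$ equals $2\pi$, convex arcs turn nonnegatively, read off a linear relation in $\alpha,\beta$). This cannot work, because the inequality $\alpha+3\beta\geq\pi$ is \emph{not} a property of arbitrary pairs of disjoint convex hulls: it holds only because $(P_1,P_2)$ is an \emph{optimal} partition, and your proposal never invokes optimality. Concretely, take two congruent disks whose boundaries nearly touch. The outer tangents are parallel, so $\alpha=0$; the inner tangents cross near the midpoint, and the wedges containing the disks each have opening close to $\pi$, so the empty wedges have opening $\beta$ close to $0$. Hence $\alpha+3\beta$ can be made as close to $0$ as you like for this (non-optimal) configuration. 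Any identity or inequality extracted from turning-angle bookkeeping around $\partial A$ would have to hold here as well, so it cannot yield $\alpha+3\beta\geq\pi$. In other words, the missing ingredient is not careful corner accounting but the optimality hypothesis itself.

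The paper's proof is of a completely different nature. From optimality of $(P_1,P_2)$ one gets that replacing the two inner boundary arcs $\bd\PP_1(s_{13},s_{14})$ and $\bd\PP_2(s_{24},s_{23})$ by the outer-tangent segments $s_{13}s_{23}$ and $s_{14}s_{24}$ (i.e.\ merging the hulls) does not decrease the cost; lower-bounding those arcs by the polygonal paths through the inner-tangent touch points $s_{11},s_{12},s_{21},s_{22}$ gives the key inequality $\Phi\leq\Psi$. Then, assuming $\alpha+3\beta<\pi$ for contradiction, the paper slides the touch points $s_{ij}$ along their tangent lines in a direction whose effect on $\Phi$ and $\Psi$ is controlled by derivative estimates (Lemma~\ref{movingPoint}); the angle hypothesis is exactly what makes the case split on $\gamma_{32}$ (compared to $\pi-\beta$) give the right derivative signs, so that $\Phi\leq\Psi$ is preserved all the way until the two hulls touch at $c_{12}$, where $\Phi\leq\Psi$ contradicts the triangle inequality. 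So the engine is an optimality-driven deformation argument, not a closed-curve turning formula, and that is a gap your plan cannot close.
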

\begin{proof}
Since $\per {\PP_1}+\per {\PP_2}$ is minimum, we know that
\[
    \perPart {\PP_1}{s_{13}}{s_{14}}+\perPart {\PP_2}{s_{24}}{s_{23}}
    \ \leq\  \Psi,
\]
where
$\Psi\mydef
\dist{s_{13}}{s_{23}}+\dist{s_{14}}{s_{24}}$.
Furthermore, we know that
$s_{11},s_{12}\in\bd \PP_1(s_{13},s_{14})$ and
$s_{21},s_{22}\in\bd \PP_1(s_{24},s_{23})$. We thus have
\[
    \perPart {\PP_1}{s_{13}}{s_{14}}+\perPart{\PP_2}{s_{24}}{s_{23}}
    \ \geq\ \Phi,
\]
where
$
\Phi\mydef
\dist{s_{13}}{s_{11}}+\dist{s_{11}}{s_{12}}+\dist{s_{12}}{s_{14}}+
\dist{s_{24}}{s_{21}}+\dist{s_{21}}{s_{22}}+\dist{s_{22}}{s_{23}}$.
Hence, we must have
\begin{equation}\label{ineq}
\Phi\ \leq\ \Psi.
\end{equation}
Now assume that $\alpha+3\beta<\pi$. We will show that this assumption,
together with inequality~\eqref{ineq}, leads to a contradiction, thus proving the lemma.
To this end we will argue that if~\eqref{ineq} holds, then there exist points $s'_{ij}$ for $i=1,2$ and $j=1,2,3,4$, where $s'_{ij}$ is a point on $\ell_j$, with the following proporties:
\begin{enumerate}[(i)]
\item $\Phi'\leq\Psi'$,
where $\Phi'$ and $\Psi'$ are defined as $\Phi$ and $\Psi$ when each point $s_{ij}$ is replaced by $s'_{ij}$,\label{prop:iii}
\item $s'_{21}$ or $s'_{22}$ coincides with $c_{12}$, and\label{prop:i}
\item $s'_{11}$ or $s'_{12}$ coincides with $c_{12}$.\label{prop:ii}
\end{enumerate}
To finish the proof it then suffices to observe that properties \eqref{prop:iii}--\eqref{prop:ii} together contradict the triangle inequality.

Note that the point $s'_{ij}$ is not required to be contained in $P_i$.
In particular, the points $s'_{13}$ and $s'_{14}$ will in some cases be on the other side of $c_{34}$ than the points $s_{13}$ and $s_{14}$.
In that case there is no pair of convex polygons with outer common tangents defined by $(s'_{13},s'_{23})$ and $(s'_{14},s'_{24})$.
The contradiction applies to distances between a configuration of points that need not be realizable as the supporting points of the common tangents of two convex polygons.
\medskip

To prove the existence of the points $s'_{ij}$ with the claimed properties, we initially define $s'_{ij}\mydef s_{ij}$, so that property~\eqref{prop:iii} is satisfied.
Then we will move the points $s'_{ij}$ (where each $s'_{ij}$ moves on $\ell_{j}$) so that property~\eqref{prop:iii} is preserved throughout the movements and properties~\eqref{prop:i} and~\eqref{prop:ii} are satisfied at the end of the movements.

We first show how to create a situation where \eqref{prop:i} holds, and \eqref{prop:iii} still holds as well.
Let $\gamma_{ij}\mydef \angle(\ell_i,\ell_j)$. We consider two cases.
\begin{itemize}
\item \emph{Case~(A): $\gamma_{32}<\pi-\beta$.} \\[2mm]
    We observe that moving $s'_{23}$ along $\ell_3$ away from $s'_{13}$ increases $\Psi'$ more than it increases $\Phi'$, so property~\eqref{prop:iii} is preserved by such a movement.
    Note that $\angle (xs'_{23},\ell_2)\geq \gamma_{32}$ for any $x\in s'_{22}c_{12}$.
    However, by moving $s'_{23}$ sufficiently far away we can make $\angle (xs'_{23},\ell_2)$ arbitrarily close to $\gamma_{32}$.
    We therefore move $s'_{23}$ so far away that $\angle (xs'_{23},\ell_2) <\pi-\beta$ for any point~$x\in s'_{22}c_{12}$.
    We now consider what happens as we let a point $x$ move at unit speed from $s'_{22}$ towards $c_{12}$.
    To be more precise, let $T\mydef\dist{s'_{22}}{c_{12}}$, let
    $\mathbf v$ be the unit vector with direction from $c_{23}$ to $c_{12}$,
    and for any $t\in[0,T]$ define $x(t)\mydef s'_{22}+t\cdot \mathbf v$.
    Note that $x(0)=s'_{22}$ and $x(T)=c_{12}$.

    Let $a(t)\mydef \dist{x(t)}{s'_{23}}$ and
    $b(t)\mydef \dist{x(t)}{s'_{21}}$.
    Lemma \ref{movingPoint} gives that
    \[
    a'(t)=-\cos(\angle(x(t)s'_{23},\ell_2)) \quad \mbox{and}\quad 
    b'(t)=\cos(\angle(\ell_2,x(t)s'_{21})).
    \]
    Since $\angle (x(t)s'_{23},\ell_2) <\pi-\beta$ for any value
    $t\in[0,T]$, we get $a'(t)< -\cos(\pi-\beta)$.
    Furthermore, we have $\angle(\ell_2,x(t)s'_{21})\geq \pi-\beta$ and hence
    $b'(t)\leq \cos(\pi-\beta)$.
    Therefore, $a'(t)+b'(t)< 0$ for any $t$ and we conclude that
    $a(T)+b(T)\leq a(0)+b(0)$. This is the same as
    $\dist{s'_{21}}{c_{12}}+\dist{c_{12}}{s'_{23}}\leq
    \dist{s'_{21}}{s'_{22}}+\dist{s'_{22}}{s'_{23}}$,
    so we now move $s'_{22}$ to~$c_{12}$ and are ensured that~\eqref{prop:iii} still holds.
\item \emph{Case~(B): $\gamma_{32}\geq\pi-\beta$.} \\[2mm]
    Using our assumption $\alpha+3\beta<\pi$ we get $\gamma_{32}> \alpha+2\beta$.
    Note that $\gamma_{14}=\pi-\gamma_{32}+\alpha+\beta$.
    Hence, $\gamma_{14}<\pi-\beta$.
    By first moving $s'_{24}$ away from $s'_{14}$ and then $s'_{21}$ towards $c_{12}$, we can argue, similarly to Case~(A), that we can reach a situation where~\eqref{prop:iii} still holds and $s'_{21}$ coincides with $c_{12}$.
\end{itemize}
We conclude that in both cases we can ensure~\eqref{prop:i} without violating~\eqref{prop:iii}.

Since $\gamma_{13}\leq\gamma_{14}$ and $\gamma_{42}\leq \gamma_{32}$,
we likewise have $\gamma_{13}<\pi-\beta$ or $\gamma_{42}<\pi-\beta$.
Hence, by first moving $s'_{13}$ or $s'_{14}$ and since then $s'_{11}$ or $s'_{12}$, we can in a similar way reach a situation where $s'_{11}$ or $s'_{12}$ coincides with $c_{12}$ without violating~\eqref{prop:iii}, thus ensuring~\eqref{prop:ii} and finishing the proof.
\end{proof}

The following technical lemma is illustrated in Fig.~\ref{fig:quadrilateral}.
The lemma will be used in the proof of the subsequent Lemma~\ref{distanceLemma}.
The overall idea in the two lemmata is that we consider pushing $\PP_2$ towards $\PP_1$ until they touch.
In the configuration where they touch, $m$ in Lemma~\ref{simpleDiff} corresponds to a common point, $r_1,r_2$ correspond to the outer common tangents, and $b_1,t_1$, resp.~$b_2,t_2$, correspond to the points where $\PP_1$, resp.~$\PP_2$, supports $r_1,r_2$.
The lemma then gives a lower bound on how much cheaper it would be to unite $\PP_1$ and $\PP_2$.
This in turn implies a lower bound on how far we pushed $\PP_2$ (using that $(P_1,P_2)$ was assumed to be an optimal bipartition), which is a lower bound on the original distance between $\PP_1$ and $\PP_2$, as stated in Theorem~\ref{th:separation-property}.
\begin{figure}
\begin{center}
\includegraphics{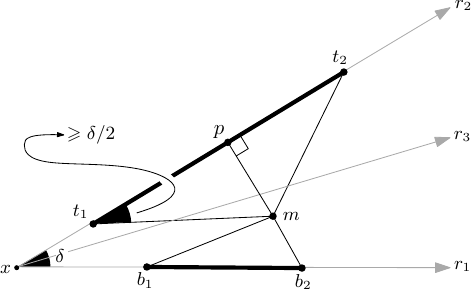}
\end{center}
\caption{Illustration for Lemma~\protect\ref{simpleDiff}. $\Phi$ is the total length of the
four segments $t_1m$, $t_2m$, $b_1m$, $b_2m$, and
$\Psi$ is the total length of the two fat segments.}
\label{fig:quadrilateral}
\end{figure}
\begin{lemma}\label{simpleDiff}
Let $x$ be a point and $r_1$ and $r_2$ be two rays starting at $x$ such that
$\angle (r_1,r_2)=\delta$, and assume
that $\delta\leq\pi$.
Let $b_1,b_2\in r_1$ and
$t_1,t_2\in r_2$ be such that $b_1\in xb_2$ and $t_1\in xt_2$, and let
$m$ be a point in the wedge bounded by $r_1$ and $r_2$.
Then
\[
    \Phi-\Psi
    \ \geq\ \frac{(1-\cos(\delta/2)) \cdot \sin (\delta/2)}{1+\sin(\delta/2)}\cdot
(\dist{b_1}{m}+\dist{t_1}{m}),
\]
where
$\Phi\mydef\dist{b_1}{m}+\dist{t_1}{m}+\dist{b_2}{m}+\dist{t_2}{m}$ and
$\Psi\mydef\dist{b_1}{b_2}+\dist{t_1}{t_2}$.
\end{lemma}
\begin{proof}
First note that

\begin{equation}\label{ineq1a}
\dist{b_1}{m}+\dist{b_2}{m}\ \geq\ \dist{b_1}{b_2}
\end{equation}
and
\begin{equation}\label{ineq1b}
\dist{t_1}{m}+\dist{t_2}{m}\ \geq\ \dist{t_1}{t_2}.
\end{equation}

Let $r_3$ be the angular bisector of $r_1$ and $r_2$. Assume without loss of
generality that $m$ lies in the wedge defined by~$r_1$ and~$r_3$.
Then $\angle (m,t_1,t_2)\geq\delta/2$.

We now consider two cases.
\begin{itemize}
\item \emph{Case~(A): $\dist{t_1}{m}\geq \frac{\sin (\delta/2)}{1+\sin(\delta/2)}\cdot(\dist{b_1}{m}+\dist{t_1}{m})$.}
    \\[2mm]
Our first step is to prove that
\begin{eqnarray}
    \dist{t_1}{m}+\dist{t_2}{m}-\dist{t_1}{t_2}
    \geq (1-\cos (\delta/2))\cdot\dist{t_1}{m}.  \label{eq:step2}
\end{eqnarray}
Let $p$ be the orthogonal projection of
$m$ on $r_2$. Note that
$\dist{t_2}m\ \geq\ \dist{t_2}p$.
Consider first the case that
$p$ is on the same side of $t_1$ as $x$. In this case
$\dist{t_2}{p}\geq \dist{t_1}{t_2}$ and therefore
\begin{eqnarray*}
    \dist{t_1}{m}+\dist{t_2}{m}-\dist{t_1}{t_2}
    \geq \dist{t_1}{m}\geq (1-\cos (\delta/2))\cdot\dist{t_1}{m},
\end{eqnarray*}
which proves \eqref{eq:step2}.

Assume now that $p$ is on the same side of $t_1$ as $t_2$.
In this case, we have $\angle(m,t_1,t_2)\leq\pi/2$ and thus
$\dist{t_1}p=\cos (\angle(m,t_1,t_2))\cdot\dist{t_1}m
\leq \cos (\delta/2)\cdot \dist{t_1}m$.
Hence we have
\begin{eqnarray*}
    \dist{t_1}{m}+\dist{t_2}{m}-\dist{t_1}{t_2}
        & \geq & \dist{t_1}{m}+\dist{t_2}{p}-(\dist{t_1}{p}+\dist{t_2}{p}) \\
        & \geq & (1-\cos (\delta/2))\cdot\dist{t_1}{m},
\end{eqnarray*}
and we have proved \eqref{eq:step2}.

    We now have
    \[
    \begin{array}{lll}
    \Phi-\Psi & = & \dist{b_1}{m}+\dist{t_1}{m}+\dist{b_2}{m}+\dist{t_2}{m} - \dist{b_1}{b_2} - \dist{t_1}{t_2} \\
              & \geq & \dist{b_1}{m}+\dist{b_2}{m} - \dist{b_1}{b_2} + (1-\cos (\delta/2))\cdot\dist{t_1}{m}\quad\text{by \eqref{eq:step2}} \\
              & \geq & (1-\cos (\delta/2))\cdot \frac{\sin (\delta/2)}{1+\sin(\delta/2)}\cdot(\dist{b_1}{m}+\dist{t_1}{m})\quad\text{by \eqref{ineq1a}} \\
    \end{array}
    \]
    where the last step uses that we are in Case~(A). Thus the lemma holds in Case~(A).
\item \emph{Case~(B): $\dist{t_1}{m} < \frac{\sin (\delta/2)}{1+\sin(\delta/2)}\cdot(\dist{b_1}{m}+\dist{t_1}{m})$.}
   \\[2mm]
    The condition for this case can be rewritten as
    \begin{equation}\label{ineq2}
    \dist{b_1}{m} \ >\ \frac{1}{1+\sin\delta/2}\cdot (\dist{b_1}{m}+\dist{t_1}{m}).
    \end{equation}
    To prove the lemma in this case we first argue that $\angle (b_2,b_1,m)>\pi/2$.
    To this end, assume for a contradiction that $\angle (b_2,b_1,m)\leq\pi/2$.
    It is easy to verify that for a given length of~$t_1m$
    (and assuming $\angle (b_2,b_1,m)\leq\pi/2$), the fraction
    $\dist{b_1}m/(\dist{b_1}m+\dist{t_1}m)$ is
    maximized when segment $t_1m$ is perpendicular to $r_2$, and $m\in r_3$,
    and $b_1=x$. But then
    \[
    \frac{\dist{b_1}{m}}{\dist{b_1}{m}+\dist{t_1}{m}}\leq \frac{1}{1+\sin\delta/2},
    \]
    which would contradict~\eqref{ineq2}.
    Thus we indeed have $\angle (b_2,b_1,m)>\pi/2$. Hence,
    $\dist{b_2}m\geq\dist{b_1}{b_2}$, and so
    $\dist{b_1}m+\dist{b_2}m-\dist{b_1}{b_2}\geq\dist{b_1}m$.
    We can now derive
    \[
    \begin{array}{lll}
    \Phi-\Psi & = & \dist{b_1}{m}+\dist{t_1}{m}+\dist{b_2}{m}+\dist{t_2}{m} - \dist{b_1}{b_2} - \dist{t_1}{t_2} \\
              & \geq & \dist{b_1}{m}+\dist{t_1}{m}+\dist{t_2}{m} - \dist{t_1}{t_2}\quad\text{by the above}\\
              & \geq & \frac{1}{1+\sin\delta/2}\cdot \big(\dist{b_1}{m}+\dist{t_1}{m}\big)\quad\text{by~\eqref{ineq1b}~and~\eqref{ineq2}} \\
               & \geq & \big( \sin (\delta/2)\cdot(1-\cos(\delta/2)) \big) \cdot \frac{1}{1+\sin\delta/2}\cdot \big( \dist{b_1}{m}+\dist{t_1}{m}\big).   \\
    \end{array}
    \]
    Thus the lemma also holds in Case~(B).
\end{itemize}

\end{proof}

Let $\distPQ(\PP_1,\PP_2)\mydef\textrm{min}_{(p,q)\in \PP_1\times \PP_2}\dist pq$
denote the separation distance between $\PP_1$ and~$\PP_2$. Recall that $\alpha$
denotes the angle between the two common outer tangents of
$\PP_1$ and $\PP_2$; see Fig.~\ref{fig:exact}.
We are now ready to give a lower bound on the separation distance increasing in the angle $\alpha$ between the outer common tangents $\ell_3$ and $\ell_4$.
The lemma will be used when there is a positive lower bound on $\alpha$, which in turn implies a lower bound on $\distPQ(\PP_1,\PP_2)$.
\begin{lemma}\label{distanceLemma} We have
\begin{equation}\label{distIneq}
\distPQ(\PP_1,\PP_2)\ \geq\ f(\alpha) \cdot \per {\PP_1},
\end{equation}
where $f\colon [0,\pi]\longrightarrow\mathbb R$ is the increasing function
\[
    f(\varphi)
    \ \mydef\
    \frac{\sin (\varphi/4)}{1+\sin (\varphi/4)}
    \cdot
    \frac{\sin (\varphi/2)}{1+\sin (\varphi/2)}
    \cdot
\frac{1-\cos (\varphi/4)}2.
\]
\end{lemma}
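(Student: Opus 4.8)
The plan is to choose coordinates so that the apex $c_{34}$ of the outer wedge lies at the origin, the bisector of that wedge is the positive $x$-axis, and the two bounding rays $\ell_3,\ell_4$ make angles $\pm\alpha/2$ with the axis. Both $\PP_1$ and $\PP_2$ lie in this wedge $W$ and are tangent to both of its rays, and, by the standing conventions, $\PP_1$ is the body lying on the apex side of the inner tangents $\ell_1,\ell_2$. I would prove \eqref{distIneq} by splitting $f(\alpha)$ into its three factors and matching each to a separate step.

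\emph{Step 1: from $\per{\PP_1}$ to an inscribed disk.} The disks contained in $W$ and tangent to both $\ell_3$ and $\ell_4$ form a one-parameter family (their centres run along the axis), and such a disk of radius $\rho$ has its centre at distance $\rho/\sin(\alpha/2)$ and its far point at distance $\rho\,(1+\sin(\alpha/2))/\sin(\alpha/2)$ from $c_{34}$. Let $\B$ be the largest disk in this family that is contained in $\PP_1$; then $\B\subseteq\PP_1$, $\B$ touches both $\ell_3$ and $\ell_4$, and $\B$ touches $\bd\PP_1$. I would show
\[
   \per{\PP_1}\ \leq\ \frac{1+\sin(\alpha/2)}{\sin(\alpha/2)}\cdot\frac{2}{1-\cos(\alpha/4)}\cdot\rho,
\]
where the first fraction accounts for the apex-to-far-point ratio of $\B$ and the second comes from a sagitta-type estimate bounding $\per{\PP_1}$ by a multiple of the extent of $\PP_1$ along the axis. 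This is the step where optimality is used: a convex body tangent to both rays of $W$ can \emph{a priori} be arbitrarily long and thin, so the displayed inequality fails for general such bodies, and it is \eqref{ineq} — together with the fact that $\PP_1$ is additionally confined by the inner tangents $\ell_1,\ell_2$ — that rules out the elongated near-degenerate configurations.

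\emph{Step 2: from the inscribed disk to the separation distance.} Here I would use only geometry: $\PP_2$ also lies in $W$, is tangent to both rays, and lies beyond $\PP_1$ (hence beyond $\B$) on the far side of $\ell_1$ and $\ell_2$. Examining the angular region left over between the far supporting line of $\B$ and $\PP_2$, and inscribing there a disk tangent to the two rays inside a sub-wedge of half-angle $\alpha/4$ obtained by bisection, the same wedge-disk computation as above gives
\[
   \distPQ(\PP_1,\PP_2)\ \geq\ \frac{\sin(\alpha/4)}{1+\sin(\alpha/4)}\cdot\rho.
\]
Multiplying the two displays yields $\distPQ(\PP_1,\PP_2)\geq f(\alpha)\cdot\per{\PP_1}$, which is \eqref{distIneq}.

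To finish, I would check that $f$ is increasing on $[0,\pi]$: in each of its three factors the inner argument ($\alpha/4$, $\alpha/2$, $\alpha/4$) ranges over an interval on which $\sin$ (respectively $1-\cos$) is increasing, and $t\mapsto t/(1+t)$ is increasing, so each factor — and hence the product — is increasing. I expect the main obstacle to be Step 1: turning the optimality inequality \eqref{ineq} into a quantitative bound saying that $\PP_1$ cannot be much longer or thinner than its inscribed wedge-disk $\B$, and carrying out the trigonometric bookkeeping carefully enough that the constants collapse to exactly the factors of $f$ rather than to some unspecified $\Theta_\alpha(1)$. Once the shape of $\PP_1$ is controlled, Steps 1 and 2 are elementary plane geometry.
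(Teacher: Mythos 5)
Your Step 2 has a genuine gap: you declare it ``only geometry,'' but the separation-distance bound it claims is not a geometric fact. For a fixed $\PP_1$ (and hence a fixed inscribed wedge-disk $\B$ of radius $\rho$), the second body $\PP_2$ can be placed arbitrarily close to $\PP_1$ while remaining convex, disjoint, and tangent to both $\ell_3$ and $\ell_4$; the separation distance $\distPQ(\PP_1,\PP_2)$ can therefore be made as small as you like, independently of $\rho$. So the inequality $\distPQ(\PP_1,\PP_2)\geq\frac{\sin(\alpha/4)}{1+\sin(\alpha/4)}\,\rho$ cannot hold for general admissible configurations --- it has to come from the optimality of the partition, i.e.\ from~\eqref{ineq}, which your Step~2 explicitly does not invoke. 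Your Step~1 uses optimality only to control the \emph{shape} of $\PP_1$ (its perimeter versus $\rho$), which is a constraint on one body in isolation, whereas the quantity that actually needs an optimality-driven lower bound is the gap \emph{between} the two bodies.

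The paper's proof places the two roles the other way around. Writing $p\in\PP_1$, $q\in\PP_2$ for the points realizing $\distPQ(\PP_1,\PP_2)$ with $pq$ horizontal, it first replaces $\PP_1,\PP_2$ by quadrilaterals $\PP'_1,\PP'_2$ satisfying a list of conditions (the key one being $\per{\PP'_1}+\per{\PP'_2}\leq\per{\ch(\PP'_1\cup\PP'_2)}$, which is where optimality via~\eqref{ineq} enters), then translates $\PP'_2$ horizontally toward $\PP'_1$ by $\lambda\in[0,\dist pq]$ until they touch, and applies Lemma~\ref{simpleDiff} to the touching configuration. Combined with $\Phi\leq\Psi(0)$ and the Lipschitz bound $\Psi(0)-\Psi(\dist pq)\leq 2\dist pq$, this yields $\dist pq$ bounded below by a constant times $\dist{s_{13}}{p}+\dist{s_{14}}{p}$. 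The part that really \emph{is} pure geometry is the final step, $\dist{s_{13}}{p}+\dist{s_{14}}{p}\geq\frac{\sin(\alpha/2)}{1+\sin(\alpha/2)}\per{\PP_1}$, which follows from the triangle inequality, the wedge estimate~\eqref{ineqs13}, and the containment of $\PP_1$ in the quadrilateral $s_{14},c_{34},s_{13},p$. So the geometric half is the perimeter bound, not the separation bound; if you want to salvage an inscribed-disk formulation, the disk would have to enter inside the translation argument (roughly, playing the role of the quadrilateral in Lemma~\ref{simpleDiff}), not as a free-standing geometric intermediary linking two independent steps.
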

\begin{proof}
The statement is trivial if $\alpha=0$ so assume $\alpha>0$.
Let $p\in \PP_1$ and $q\in \PP_2$ be points so that
$\dist pq=\distPQ(\PP_1,\PP_2)$ and assume
without loss of generality that $pq$ is a horizontal segment with $p$
being its left endpoint.
Let $\ellvert 1$ and $\ellvert 2$
be vertical lines containing $p$ and $q$, respectively.
Note that $\PP_1$ is in the closed halfplane to the left of $\ellvert 1$ and
$\PP_2$ is in the closed halfplane to the right of $\ellvert 2$.
Recall that $s_{ij}$ denotes a point on $\bd \PP_i \cap\ell_j$.
\begin{figure}
\begin{center}
\includegraphics[width=\textwidth]{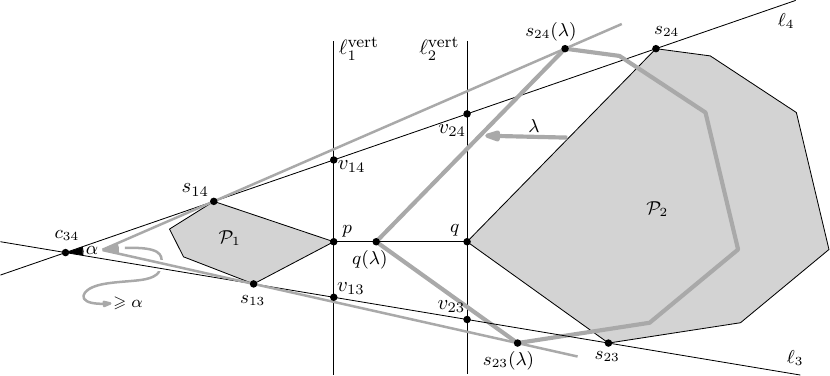}
\end{center}
\caption{Illustration for the proof of Lemma~\protect\ref{distanceLemma}.}
\label{fig:distanceLemma}
\end{figure}

\mypara{Claim:} There exist two convex polygons $\PP'_1$ and $\PP'_2$
satisfying the following conditions:
\begin{enumerate}
\item
$\PP'_1$ and $\PP'_2$ have the same outer common tangents as $\PP_1$ and $\PP_2$, namely $\ell_3$ and $\ell_4$.
\label{prop:a}

\item
$\PP'_1$ is to the left of $\ellvert 1$ and $p\in\bd\PP'_1$; and
$\PP'_2$ is to right of $\ellvert 2$ and $q\in\bd\PP'_2$.
\label{prop:b}

\item
$\per{\PP'_1}=\per{\PP_1}$.
\label{prop:c}

\item
$\per{\PP'_1}+\per{\PP'_2}\leq\per{\ch(\PP'_1\cup\PP'_2)}$.
\label{prop:d}

\item
There are points $s'_{ij}\in\PP'_i\cap\ell_j$ for all $i\in\{1,2\}$ and $j\in\{3,4\}$
such that $\bd\PP'_1(s'_{13},p)$, $\bd\PP'_1(p,s'_{14})$,
$\bd\PP'_2(s'_{24},q)$, and $\bd\PP'_2(q,s'_{23})$ each consist
of a single line segment.
\label{prop:e}

\item
Let $s'_{2j}(\lambda)\mydef s'_{2j}-(\lambda,0)$
and let $\ell'_j(\lambda)$ be the line through $s'_{1j}$ and
$s'_{2j}(\lambda)$ for $j\in\{3,4\}$.
Then $\angle(\ell'_3(\dist pq),\ell'_4(\dist pq))\geq\alpha/2$.
(Looking at Fig.~\ref{fig:distanceLemma}, one might believe that this inequality even holds for $\alpha$ instead of $\alpha/2$.
The reason for using $\alpha/2$ will be explained later.)
\label{prop:f}
\end{enumerate}

\begin{proof}[Proof of the claim]
Let $\PP_1'\mydef\PP_1$ and $\PP_2'\mydef\PP_2$, and let
$s_{ij}'$ be a point in $\PP_i'\cap\ell_j$ for all $i\in\{1,2\}$ and $j\in\{3,4\}$.
We show how to modify $\PP_1'$ and $\PP_2'$ until they
have all the required conditions. Of course,
they already satisfy conditions \ref{prop:a}--\ref{prop:d}.
We first show how to obtain condition \ref{prop:e}, namely
that $\bd\PP_1'(s_{13}',p)$
and $\bd\PP_1'(p,s_{14}')$---and similarly $\bd\PP_2'(s_{24}',q)$
and $\bd\PP_1'(q,s_{23}')$---each consist of a single line segment,
as depicted in Fig.~\ref{fig:distanceLemma}.
To this end, let $v_{ij}$ be the intersection point $\ellvert i\cap \ell_j$ for
$i\in\{1,2\}$ and $j\in\{3,4\}$. Let $s'\in s_{14}'v_{14}$ be the point such that
$\perPart{\PP_1'}{p}{s_{14}'}=\dist p{s'}+\dist{s'}{s_{14}'}$. Such a point exists since
\[
    \dist p{s_{14}'}
    \ \leq\ \perPart{\PP_1'}{p}{s_{14}'}
    \ \leq\ \dist p{v_{14}}+\dist{v_{14}}{s_{14}'}.
\]
We modify $\PP_1'$ by replacing $\bd  \PP_1'(p,s_{14}')$ with
the segments $ps'$ and $s's_{14}'$. We can now redefine
$s_{14}'\mydef s'$ so that $\bd \PP_1'(p,s_{14}')=ps_{14}'$ is a line segment.
We can modify $\PP_1'$ in a similar way to ensure that
$\bd \PP_1'(s_{13}',p)=s_{13}'p$, and we can modify $\PP_2'$ to ensure
$\bd \PP_2'(s_{24}',q)=s_{24}'q$ and $\bd \PP_2'(q,s_{23}')=qs_{23}'$.
Note that these modifications preserve conditions
\ref{prop:a}--\ref{prop:d} and that condition \ref{prop:e} is now
satisfied.

The only condition that $(\PP_1',\PP_2')$ might not satisfy is condition \ref{prop:f}.
Let $s_{2j}'(\lambda)\mydef s_{2j}'-(\lambda,0)$ and let
$\ell_j(\lambda)$ be the line through $s_{2j}'(\lambda)$ and $s_{1j}'$ for
$j\in\{3,4\}$.
Clearly, if the slopes of $\ell_3$ and $\ell_4$ have different signs (as in Fig.~\ref{fig:distanceLemma}),
the angle
$\angle(\ell_3(\lambda),\ell_4(\lambda))$ is increasing for
$\lambda\in[0,\dist pq]$, and condition \ref{prop:f} is satisfied.
However, if the slopes of $\ell_3$ and $\ell_4$ have the same sign, the angle
might decrease.

Consider the case where both slopes are positive---the other case
is analogous.
Changing $\PP_2'$ by
replacing $\bd\PP_2'(s_{23}',s_{24}')$ by the line segment
$s_{23}'s_{24}'$ makes the sum $\per{\PP_1'}+\per{\PP_2'}$ and
$\per{\ch(\PP_1'\cup\PP_2')}$ decrease equally much and hence
condition \ref{prop:d} is preserved.
This clearly has no influence on the other conditions. We thus assume that
$\PP_2'$ is the triangle $qs_{23}'s_{24}'$.
Consider what happens if we move $s_{23}'$
along the line $\ell_3$ away from $c_{34}$ with unit speed.
Then $\dist{s_{13}'}{s_{23}'}$ grows with speed exactly $1$ whereas
$\dist{q}{s_{23}'}$ grows with speed at most $1$. We therefore
preserve condition
\ref{prop:d}, and the other conditions are likewise not affected.

We now move
$s_{23}'$ sufficiently far away so that
$\angle(\ell_3,\ell_3(\dist pq))\leq\alpha/4$.
Similarly, we move $s_{24}'$ sufficiently far away from
$c_{34}$ along $\ell_4$ to ensure that
$\angle(\ell_4,\ell_4(\dist pq))\leq\alpha/4$.
It then follows that
$\angle(\ell_3(\dist pq),\ell_4(\dist pq))\geq
\angle(\ell_3,\ell_4)-\alpha/2=\alpha/2$, and condition \ref{prop:f} is
satisfied.
\end{proof}

Note that condition \ref{prop:b} in the claim implies that
$\distPQ(\PP'_1,\PP'_2)=\distPQ(\PP_1,\PP_2)=\dist pq$,
and hence inequality \eqref{distIneq} follows from
condition \ref{prop:c} if we manage to prove
$\distPQ(\PP'_1,\PP'_2)\geq f(\alpha)\cdot\per{\PP'_1}$.
Therefore,
with a slight abuse of notation, we assume from now on that
$\PP_1$ and $\PP_2$ satisfy the conditions in the claim,
where the points $s_{ij}$ play the role as
$s'_{ij}$ in conditions \ref{prop:e} and \ref{prop:f}.

We now consider a copy of $\PP_2$ that is translated horizontally to the left over a distance~$\lambda$; see Fig.~\ref{fig:distanceLemma}. Let
$s_{24}(\lambda)$, $s_{23}(\lambda)$, and $q(\lambda)$ be the translated
copies of $s_{24}$, $s_{23}$, and $q$, respectively, and let
$\ell_j(\lambda)$ be the line through $s_{1j}$ and $s_{2j}(\lambda)$
for $j\in\{3,4\}$. Furthermore, define
\[
    \Phi(\lambda)
    \ \mydef \ \dist{s_{13}}{p}+\dist{s_{14}}{p}+
    \dist{s_{23}(\lambda)}{q(\lambda)}+\dist{s_{24}(\lambda)}{q(\lambda)}
\]
and
\[
    \Psi(\lambda)
    \ \mydef \ \dist{s_{13}}{s_{23}(\lambda)}+\dist{s_{14}}{s_{24}(\lambda)}.
\]
Note that $\Phi(\lambda)=\Phi$ is constant.
By conditions \ref{prop:d} and \ref{prop:e}, we know that
\begin{equation}\label{ineq3}
\Phi\ \leq\ \Psi(0).
\end{equation}
Note that $q(\dist pq)=p$. We now apply Lemma \ref{simpleDiff} to get
\begin{equation}\label{ineq4}
\Phi-\Psi(\dist pq)
\ \geq\
\sin(\delta/2) \cdot \frac{1-\cos(\delta/2)}{1+\sin(\delta/2)}\cdot
(\dist{s_{13}}{p}+\dist{s_{14}}{p}),
\end{equation}
where $\delta\mydef\angle(\ell_3(\dist pq),\ell_4(\dist pq))$.
By condition \ref{prop:f}, we know that $\delta\geq\alpha/2$.
The function
$\varphi \longmapsto \sin(\varphi/2)\cdot \frac{1-\cos(\varphi/2)}{1+\sin(\varphi/2)}$
is increasing for $\varphi\in[0,\pi]$ and hence
inequality~\eqref{ineq4} also holds when $\delta$ is replaced by $\alpha/2$.

When $\lambda$ increases from $0$ to $\dist pq$
with unit speed,
the value $\Psi(\lambda)$ decreases with speed at most~$2$, i.e.,
$\Psi(\lambda)\geq\Psi(0)-2\lambda$.
Using this and inequalities \eqref{ineq3} and \eqref{ineq4}, we get
    \[
    2\dist pq \ \geq \ \Psi(0)-\Psi(\dist pq)\ \geq\ \Phi-\Phi+
    \sin(\alpha/4) \cdot \frac{1-\cos(\alpha/4)}{1+\sin(\alpha/4)}\cdot
(\dist{s_{13}}{p}+\dist{s_{14}}{p}),
\]
and we conclude that
\begin{equation}\label{ineq5}
\dist pq\ \geq\
\frac 12\cdot\sin(\alpha/4)\cdot\frac{1-\cos(\alpha/4)}{1+\sin(\alpha/4)}\cdot
(\dist{s_{13}}{p}+\dist{s_{14}}{p}).
\end{equation}

By the triangle inequality,
$\dist{s_{13}}{p}+\dist{s_{14}}{p}\geq \dist{s_{13}}{s_{14}}$.
Furthermore, for a given length of $s_{13}s_{14}$,
the fraction
$\dist{s_{13}}{s_{14}} / (\dist{s_{14}}{c_{34}}+\dist{c_{34}}{s_{13}})$
is minimized when $s_{13}s_{14}$ is perpendicular to the angular bisector of
$\ell_3$ and $\ell_4$. (Recall that $c_{34}$ is the intersection point of the
outer common tangents $\ell_3$ and $\ell_4$; see Fig.~\ref{fig:distanceLemma}.)
Hence
\begin{equation}\label{ineqs13}
    \dist{s_{13}}{s_{14}}
    \ \geq\
    \sin(\alpha/2)\cdot\left(\dist{s_{14}}{c_{34}}+\dist{c_{34}}{s_{13}}\right).
\end{equation}
We now conclude
\[
\begin{array}{lll}
\dist{s_{13}}{p}+\dist{s_{14}}{p}
    &  = & \frac{\sin(\alpha/2)}{1+\sin(\alpha/2)} \cdot \Big( \frac{\dist{s_{13}}{p}+\dist{s_{14}}{p}}{\sin(\alpha/2)} + \dist{s_{13}}{p}+\dist{s_{14}}{p} \Big) \\[2mm]
    &  \geq & \frac{\sin(\alpha/2)}{1+\sin(\alpha/2)}
              \cdot \Big( \frac{\dist{s_{13}}{s_{14}}}{\sin(\alpha/2)} + \dist{s_{13}}{p}+\dist{s_{14}}{p} \Big)
              \quad\text{triangle inequality} \\[2mm]
    & \geq & \frac{\sin(\alpha/2)}{1+\sin (\alpha/2)}
             \cdot \Big( \dist{s_{14}}{c_{34}}+\dist{c_{34}}{s_{13}}+\dist{s_{13}}{p}+\dist{s_{14}}{p} \Big)
             \quad\text{by \eqref{ineqs13}} \\[2mm]
    & \geq & \frac{\sin(\alpha/2)}{1+\sin(\alpha/2)}\cdot\per{\PP_1},
\end{array}
\]
where the last inequality follows because $\PP_1$ is fully contained in the
quadrilateral~$s_{14},c_{34},x_{13},p$.
The statement~\eqref{distIneq} in the lemma now follows from \eqref{ineq5}.
\end{proof}
We are now ready to prove Theorem~\ref{th:separation-property}.
\begin{proof}[Proof of Theorem~\ref{th:separation-property}]
If the separation angle of $P_1$ and $P_2$ is at least $\pi/6$, we are done.
Otherwise,
Lemma \ref{bigAngles} gives that $\alpha>\pi/2$,
and Lemma \ref{distanceLemma} gives that
$\distPQ(\PP_1,\PP_2)\ \geq\ f(\pi/2) \cdot \per {\PP_1} \geq
(1/250)\cdot\min(\per{\PP_1},\per{\PP_2})$.
\end{proof}

\subsection{The algorithm}\label{subse:alg}
Theorem~\ref{th:separation-property} suggests to distinguish two cases when computing
an optimal partition: the case when the separation angle is
large (namely at least $\pi/6$) and the case when the separation distance is
large (namely at least $\csep\cdot \min (\myper(P_1),\myper(P_2))$).
As we will see, the first case can be handled in
$O(n \log n)$ time and the second case in
$O(n\log^2 n)$ time, leading to the following theorem.
\begin{theorem}\label{th:main}
Let $P$ be a set of $n$ points in the plane. Then we can compute a partition $(P_1,P_2)$
of $P$ that minimizes $\myper(P_1)+\myper(P_2)$ in $O(n\log^2 n)$ time
using $O(n\log^2 n)$ space.
\end{theorem}

\subsubsection{The best partition with large separation angle}\label{sec:sep}
Define the \emph{orientation} of a line~$\ell$, denoted by $\phi(\ell)$, to be the
counterclockwise angle that~$\ell$ makes with the positive $y$-axis.
If the separation angle of $P_1$ and $P_2$ is at least $\pi/6$, then there must
be a line $\ell$ separating $P_1$ from $P_2$ that does not contain any point
from $P$ and such that $\phi(\ell)= j\cdot \pi/7$ for some~$j\in\{0,1,\ldots,6\}$.
For each of these seven orientations we can compute the best partition
in $O(n\log n)$ time, as explained next.

Without loss of generality, consider separating lines~$\ell$ with $\phi(\ell)=0$,
that is, vertical separating lines. Let $X$ be the set of all $x$-coordinates of the
points in~$P$. For any $x$-value $x\in X$ define $P_1(x) \mydef \{ p\in P \mid p_x \leq x \}$, where
$p_x$ denotes the $x$-coordinate of a point~$p$, and define $P_2(x) \mydef P\setminus P_1(x)$.
Our task is to find the best partition of the form~$(P_1(x),P_2(x))$ over all $x\in X$.
To this end we first compute the values $\myper(P_1(x))$ for all $x\in X$
in $O(n \log n)$ time in total, as follows. We compute the lengths of the upper hulls
of the point sets~$P_1(x)$, for all $x\in X$, using Graham's scan~\cite{bcko-cgaa-08},
and we compute the lengths of the lower hulls in a second scan.
(Graham's scan goes over the points from left to right and maintains
the upper (or lower) hull of the encountered points; it is trivial to extend
the algorithm so that it also maintains the length of the hull.)
By combining the lengths of the upper and lower hulls, we get the values $\myper(P_1(x))$.

Computing the values $\myper(P_2(x))$ can be done similarly, after which we can easily
find the best partition of the form~$(P_1(x),P_2(x))$ in $O(n)$ time.
Thus the best partition with large separation angle can be found in $O(n\log n)$ time.

\subsubsection{The best partition with large separation distance}\label{sec:dist}
Next we show how to compute the best partition with large separation distance.
We assume without loss of generality that $\myper(P_2)\leq \myper(P_1)$.
It will be convenient to treat the case where $P_2$ is a singleton separately.
\begin{lemma}\label{le:singleton}
The point $p\in P$ minimizing $\myper(P\setminus \{p\})$
can be computed using $O(n\log n)$ time.
\end{lemma}
\begin{proof}
The point~$p$ we are looking for must be a vertex of~$\ch(P)$.
First we compute $\ch(P)$ in $O(n\log n)$ time~\cite{bcko-cgaa-08}.
Let $v_0,v_1,\ldots,v_{m-1}$ denote the vertices of~$\ch(P)$ in counterclockwise order.
Let $\Delta_i$ be the triangle with vertices $v_{i-1} v_i v_{i+1}$
(with indices taken modulo~$m$) and let
$P_i$ denote the set of points lying inside~$\Delta_i$,
excluding $v_i$ but including $v_{i-1}$ and $v_{i+1}$.
Note that any point $p\in P$ is present in
at most two sets~$P_i$. Hence, $\sum_{i=0}^m |P_i|= O(n)$.
It is not hard to compute the sets $P_i$ in $O(n\log n)$ time in total.
After doing so, we compute all convex hulls~$\ch(P_i)$
in $O(n\log n)$ time in total. Since
\[
\myper(P\setminus\{v_i\})
  = \myper(P) - |v_{i-1}v_i| - |v_i v_{i+1}| +  \myper(P_i) - |v_{i-1}v_{i+1}|,
\]
we can now find the point $p$ minimizing $\myper(P\setminus \{p\})$
in $O(n)$ time.
\end{proof}

It remains to compute the best partition $(P_1,P_2)$ with
$\myper(P_2)\leq \myper(P_1)$ whose separation distance is
at least~$\csep \cdot\myper(P_2)$ and where $P_2$ is not a singleton.
Let $(P_1^*,P_2^*)$ denote this partition. Define the \emph{size}
of a square\footnote{Whenever we speak of squares, we always mean axis-parallel squares.}~$\sigma$ to be its edge length.
A square $\sigma$
is a \emph{good square} if
(i) $P_2^* \subset \sigma$, and
(ii) $\size(\sigma)\leq c^* \cdot \myper(P^*_2)$, where $c^* \mydef 18$.
Our algorithm globally works as follows.
\begin{enumerate}
\item Compute a set $S$ of $O(n)$ squares such that $S$ contains a good square.
\item For each square $\sigma\in S$, construct a set $H_{\sigma}$ of $O(1)$
      halfplanes such that the following holds: if $\sigma\in S$ is a good square
      then there is a halfplane $h\in H_\sigma$ such that $P^*_2 = P(\sigma\cap h)$,
      where $P(\sigma\cap h) \mydef P\cap (\sigma\cap h)$.
\item For each pair $(\sigma,h)$ with $\sigma\in S$ and $h\in H_\sigma$, compute
      $\myper(P\setminus P(\sigma\cap h)) + \myper(P(\sigma\cap h))$, and
      report the partition $(P\setminus P(\sigma\cap h), P(\sigma\cap h))$
      that gives the smallest sum.
\end{enumerate}
\mypara{Step 1: Finding a good square.}
To find a set $S$ that contains a good square, we first construct a set
$\Sbase$ of so-called \emph{base squares}. The set $S$ will then be obtained
by expanding the base squares appropriately.

We define a base square~$\sigma$ to be \emph{good} if
(i) $\sigma$ contains at least one point from $P^*_2$, and
(ii) $c_1 \cdot\diam(P^*_2) \leq \size(\sigma) \leq c_2 \cdot\diam(P^*_2)$,
where $c_1 \mydef 1/4$ and $c_2\mydef 4$
and $\diam(P^*_2)$ denotes the diameter of~$P^*_2$.
Note that $2\cdot\diam(P^*_2)\leq\per{P^*_2}\leq 4\cdot\diam(P^*_2)$.
For a square~$\sigma$, define $\expand{\sigma}$ to be the square
with the same center as $\sigma$ and whose size is
$\left(1+2/c_1\right)\cdot\size(\sigma)$.
\begin{lemma}\label{le:base-square}
If $\sigma$ is a good base square then $\expand{\sigma}$ is a good square.
\end{lemma}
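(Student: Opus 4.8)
The plan is to verify directly the two defining conditions of a \emph{good square} for $\expand{\sigma}$, chaining together the two conditions that make $\sigma$ a good base square with the relation $2\cdot\diam(P^*_2)\le\per{P^*_2}\le 4\cdot\diam(P^*_2)$ noted above. Throughout I would fix a point $x\in\sigma\cap P^*_2$, which exists by condition~(i) of $\sigma$ being a good base square, and abbreviate $d\mydef\diam(P^*_2)$ and $s\mydef\size(\sigma)$; then condition~(ii) of a good base square reads $c_1 d\le s\le c_2 d$, and $\size(\expand{\sigma})=(1+2/c_1)\,s$.

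First I would check the containment $P^*_2\subseteq\expand{\sigma}$, which is condition~(i) of a good square. Every point of $P^*_2$ lies within Euclidean distance $d$ of $x$, hence within $L_\infty$-distance $d$ of $x$. Since $x\in\sigma$, every point of $P^*_2$ therefore lies in the axis-parallel square concentric with $\sigma$ of side length $s+2d$ (expanding $\sigma$ symmetrically by $d$ on each side). From $c_1 d\le s$ we get $d\le s/c_1$, so $s+2d\le s+2s/c_1=(1+2/c_1)\,s=\size(\expand{\sigma})$; as $\expand{\sigma}$ is concentric with $\sigma$, this gives $P^*_2\subseteq\expand{\sigma}$.

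Next I would bound $\size(\expand{\sigma})$ to get condition~(ii). Using $s\le c_2 d$ and then $d\le\tfrac12\per{P^*_2}$ (from $\per{P^*_2}\ge 2d$), we obtain
\[
  \size(\expand{\sigma})=(1+2/c_1)\,s\ \le\ (1+2/c_1)\,c_2\,d\ \le\ \tfrac12(1+2/c_1)\,c_2\cdot\per{P^*_2}.
\]
Plugging in $c_1=1/4$ and $c_2=4$ makes the constant $\tfrac12\cdot 9\cdot 4=18=c^*$, so $\size(\expand{\sigma})\le c^*\cdot\per{P^*_2}$, completing the verification.

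The only point requiring a little care is the first step, where one must pass from the Euclidean diameter bound to an $L_\infty$/side-length statement (this is free, since $\lVert\cdot\rVert_\infty\le\lVert\cdot\rVert_2$) and keep track of the factor $2$ from expanding a square symmetrically about its center. Beyond that there is no real obstacle: the lemma is a bookkeeping statement whose entire content is that the constants $c_1$, $c_2$, $c^*$ and the expansion factor $1+2/c_1$ have been chosen so that these two inequalities close up.
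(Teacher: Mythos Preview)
Your proof is correct and follows essentially the same approach as the paper: you verify containment by comparing $\diam(P^*_2)$ against the margin $(\size(\expand{\sigma})-\size(\sigma))/2$ around $\sigma$, and you bound $\size(\expand{\sigma})$ via $s\le c_2 d$ and $2d\le\per{P^*_2}$. The paper phrases the first step as ``the distance from any point in $\sigma$ to $\partial\expand{\sigma}$ is at least $\diam(P^*_2)$'' rather than via an intermediate concentric square, but the content is identical.
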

\begin{proof}
The distance from any point in $\sigma$ to the boundary of $\expand{\sigma}$
is at least
\[
\frac{\size(\expand{\sigma})-\size(\sigma)}{2}
    \ \geq\ \diam(P^*_2).
\]
Since $\sigma$ contains a point from $P^*_2$,
it follows that
$P^*_2\subset\expand{\sigma}$.
Since $\size(\sigma) \leq c_2 \cdot \diam(P^*_2)$, we have
\[
\size(\expand{\sigma})
    \ \leq \ (2/c_1+1)\cdot c_2 \cdot\diam(P^*_2)
    \ = \ 36 \cdot \diam(P^*_2)
    \ \leq\ c^*\cdot \myper(P^*_2).
\]
\end{proof}

To obtain $S$ it thus suffices to construct a set $\Sbase$ that contains a good base square.
To this end we first build a compressed quadtree for $P$. For completeness we briefly
review the definition of compressed quadtrees; see also Fig.~\ref{fi:base-squares-types}~(left).

Assume without loss of generality that $P$ lies in the interior of the unit square $U\mydef [0,1]^2$.
Define a \emph{canonical square} to be any square that
can be obtained by subdividing~$U$ recursively into quadrants.
A \emph{compressed quadtree}~\cite{h-11} for $P$ is a hierarchical subdivision of $U$,
defined as follows.
In a generic step of the recursive process we are given a
canonical square $\sigma$ and the set $P(\sigma) \mydef P\cap \sigma$ of
points inside~$\sigma$.
(Initially $\sigma =U$ and $P(\sigma)=P$.) \\
\begin{itemize}
\item If $|P(\sigma)|\leq 1$ then the recursive process stops and $\sigma$
      is a square in the final subdivision.
\item Otherwise there are two cases.
      Consider the four quadrants of $\sigma$.
      The first case is that at least two of these quadrants contain points from $P(\sigma)$.
      (We consider the quadrants to be closed on the left and bottom side,
       and open on the right and top side, so a point is contained in a unique quadrant.)
      In this case we partition $\sigma$ into its four quadrants---we call this
      a \emph{quadtree split}---and recurse on each quadrant.
      The second case is that all points from $P(\sigma)$ lie inside the same quadrant.
      In this case we compute the smallest canonical square, $\sigma'$, that
      contains $P(\sigma)$ and we partition $\sigma$ into two regions: the
      square $\sigma'$ and the so-called \emph{donut region}~$\sigma\setminus \sigma'$.
      We call this a \emph{shrinking step}. After a shrinking step we only
      recurse on the square $\sigma'$, not on the donut region. \\
\end{itemize}
A compressed quadtree for a set of $n$ points can be computed in $O(n\log n)$ time
in the appropriate model of computation\footnote{In particular we need to be able to compute
the smallest canonical square containing two given points in~$O(1)$ time. See the book by
Har-Peled~\cite{h-11} for a discussion.}\label{fn:quadtree}~\cite{h-11}.
The idea is now as follows. Let $p,p'\in P^*_2$ be a pair of points defining~$\diam(P_2^*)$.
The compressed quadtree hopefully allows us to zoom in until we have a square in the
compressed quadtree that contains $p$ or $p'$ and whose size is roughly equal to~$|pp'|$.
Such a square will be then a good base square.
Unfortunately this does not always work since $p$ and $p'$
can be separated too early. We therefore have to proceed more carefully:
we need to add five types of base squares to $\Sbase$, as explained next and illustrated in
Fig.~\ref{fi:base-squares-types}~(right).
\begin{figure}
\begin{center}
\includegraphics[width=1.0\textwidth]{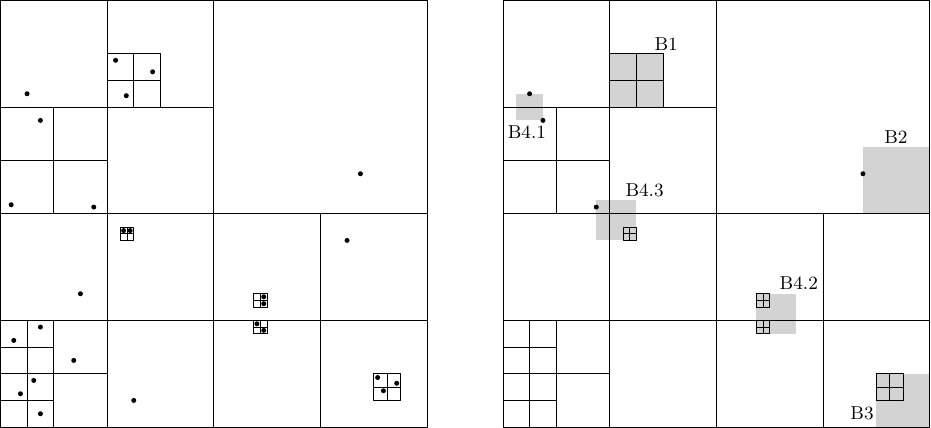}
\end{center}
\caption{A compressed quadtree and some of the base squares generated from it.
In the right figure, only the points are shown that are relevant
for the shown base squares.}
\label{fi:base-squares-types}
\end{figure}

\begin{description}
\item[(B1)] Any square $\sigma$ that is generated during the recursive construction---note
      that this not only refers to squares in the final subdivision---is put into~$\Sbase$.
\item[(B2)] For each point $p\in P$ we add a square $\sigma_p$ to $\Sbase$, as follows.
      Let $\sigma$ be the square of the final subdivision that contains~$p$. Then
      $\sigma_p$ is a smallest square
      that contains $p$ and that shares a corner with~$\sigma$.
\item[(B3)] For each square $\sigma$ that results from a shrinking step we add an extra
      square~$\sigma'$ to $\Sbase$, where $\sigma'$ is the smallest square that contains~$\sigma$ and that shares a corner with the parent square of $\sigma$.
\item[(B4)] For any two regions in the final subdivision that touch each other---we also
      consider two regions to touch if they only share a vertex---we
      add at most one square to $\Sbase$, as follows.
      If one of the regions is an empty square, we do not add anything
      for this pair. Otherwise we have three cases.
      \begin{description}
      \item[(B4.1)] If both regions are non-empty squares containing single
            points~$p$ and $p'$, respectively, then we add a smallest
            enclosing square for the pair of points $p,p'$ to $\Sbase$.
      \item[(B4.2)] If both regions are donut regions, say $\sigma_1\setminus\sigma_1'$
            and $\sigma_2\setminus\sigma_2'$, then we add a smallest enclosing square
            for the pair $\sigma'_1,\sigma'_2$ to $\Sbase$.
      \item[(B4.3)] If one region is a non-empty square containing a single point $p$
            and the other is a donut region $\sigma\setminus\sigma'$,
            then we add a smallest enclosing square
            for the pair $p,\sigma'$ to $\Sbase$.
      \end{description}
\end{description}
\begin{lemma}
The set $\Sbase$ has size $O(n)$ and contains a good base square. Furthermore, $\Sbase$
can be computed in $O(n\log n)$ time.
\end{lemma}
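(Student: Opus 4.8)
The claim has three parts: $|\Sbase| = O(n)$, $\Sbase$ contains a good base square, and $\Sbase$ can be computed in $O(n\log n)$ time. The first and third parts are bookkeeping; the second is the heart of the matter. I will address them in that order.

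\emph{Size bound.} The plan is to charge the squares to features of the compressed quadtree, which has $O(n)$ nodes and hence $O(n)$ regions in the final subdivision. Type (B1) contributes one square per node generated, so $O(n)$. Type (B2) contributes one square per point, so $O(n)$. Type (B3) contributes one square per shrinking step, hence at most one per node, so $O(n)$. For type (B4) the subtlety is that a priori two regions could share vertices with many others; but in a (compressed) quadtree each region is a square or a donut region, and a standard packing/adjacency argument shows each region touches only $O(1)$ other regions of comparable or larger size, and the total number of touching pairs over the whole subdivision is $O(n)$. (This is the same kind of argument used to bound the number of neighbouring cells in a balanced quadtree.) Summing, $|\Sbase| = O(n)$.

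\emph{Existence of a good base square.} This is the main obstacle. Fix a diametral pair $p,p' \in P_2^*$, so $|pp'| = \diam(P_2^*)$. I would trace the recursive construction of the compressed quadtree and follow the canonical squares that contain $p$ (equivalently, the root-to-leaf path in the tree). Let $\sigma$ be the last canonical square on this path whose size is at least, say, $c_1 \cdot \diam(P_2^*) \cdot (\text{something}\ge 1)$ — more precisely, the deepest square on the path that still contains \emph{both} $p$ and $p'$, or, if $p$ and $p'$ get separated by a split or a shrink before the size drops to the target range, I analyse the step at which the separation happens. There are essentially two scenarios. (1) If $p$ and $p'$ stay together until the square size is in the window $[c_1 \diam, c_2 \diam]$ — which must happen at \emph{some} point along the way if the size halves at each step, unless the size jumps past the window — then that square directly satisfies conditions (i) and (ii) of a good base square (it contains $p \in P_2^*$, and its size is in the right range), and it is a type-(B1) square. (2) If the size jumps past the window — this can only happen at a shrinking step, where $\sigma$ is replaced by a much smaller $\sigma'$ — or if $p$ and $p'$ are split into different quadrants, then I use the auxiliary square types: type (B3) provides a square of intermediate size straddling a shrink (the smallest square containing $\sigma'$ sharing a corner with the parent of $\sigma'$), and types (B2), (B4.1)–(B4.3) provide squares built from $p$, from $p'$, or from the small squares/donut cores of the two regions that end up separating them — in each case a smallest enclosing square of two features of size $O(\diam(P_2^*))$ that are at distance $\Theta(\diam(P_2^*))$, which again lands in the window $[c_1\diam, c_2\diam]$ and contains a point of $P_2^*$. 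The case analysis must check: whenever $p$ and $p'$ are first separated (or the scale jumps), the two regions containing them are a square/square, donut/donut, or square/donut pair, they touch, and the corresponding (B4) or (B3) square has size $\ge c_1\diam(P_2^*)$ (because $p,p'$ are that far apart and lie in these two regions) and $\le c_2\diam(P_2^*)$ (because each region that still contains a point of the diametral pair has diameter $O(\diam(P_2^*))$, and a shrink produces $\sigma'$ just barely containing its points). I expect this enumeration of separation events — and verifying the size window in each — to be the genuinely delicate part, since one must be careful that a donut region can be large while its inner square $\sigma'$ is small, which is exactly why (B3) and the $\sigma'$-based variants of (B4) are needed.

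\emph{Running time.} The compressed quadtree is built in $O(n\log n)$ time by the assumed model of computation. The squares of types (B1) and (B3) are produced directly during the construction, in $O(1)$ time each. For (B2), locating the leaf square containing each point is done while building the tree, and the smallest square sharing a corner with it and containing $p$ is computed in $O(1)$ time per point. For (B4), I enumerate touching pairs of final regions; by the adjacency argument above there are $O(n)$ of them, and each can be found in $O(1)$ amortized time by walking the neighbour structure of the compressed quadtree (or by a standard plane-sweep / neighbour-finding routine), with each contributing one smallest-enclosing-square computation in $O(1)$ time. Hence the total is $O(n\log n)$, dominated by building the compressed quadtree and the initial point location.
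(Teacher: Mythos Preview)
Your proposal is correct and follows essentially the same approach as the paper: bound $|\Sbase|$ by charging each base-square type to features of the compressed quadtree (with the planarity of the final subdivision giving $O(n)$ touching pairs for type~(B4)); and for existence, fix a diametral pair $p,p'\in P_2^*$, follow their root-to-leaf paths in the quadtree, and argue that either some square on a path already has size in $[c_1\diam(P_2^*),\,c_2\diam(P_2^*)]$ (type~(B1)), or the relevant leaves/donut-cores touch and one of the (B2)/(B3)/(B4) squares works.

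The only difference is organizational. You frame the case analysis around ``the step at which $p$ and $p'$ get separated'' and ``whether the size jumps past the window''; the paper instead looks at the two leaf squares $\sigma_k\ni p$ and $\sigma'_{k'}\ni p'$ and branches on which of them are too large versus too small. The paper's framing is slightly cleaner because it directly forces the observation you need: a too-large square going to a too-small child can only happen at a shrinking step (since $c_1<c_2/2$), and two too-large squares containing $p$ and $p'$ respectively must touch. Your separation-event framing has to recover these same facts and also explicitly handle the case where \emph{both} leaves are still too large (your scenario list emphasizes size-jumps and splits, but the too-large/too-large case involves neither---it just uses (B4.1) on the two singleton leaves). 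None of this is a real gap; the ingredients you list suffice, but the bookkeeping will be tidier if you switch to the leaf-size trichotomy.
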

\begin{proof}
A compressed quadtree has size $O(n)$ so we have $O(n)$ base squares of type~(B1) and~(B3).
Obviously there are $O(n)$ base squares of type~(B2). Finally, the number
of pairs of final regions that touch is $O(n)$---this follows because we have a planar
rectilinear subdivision of total complexity~$O(n)$---and so the number of base squares
of type~(B4) is $O(n)$ as well. The fact that we can compute $\Sbase$ in $O(n\log n)$ time
follows directly from the fact that we can compute the compressed quadtree in $O(n\log n)$
time~\cite{h-11}.
\medskip

It remains to prove that $\Sbase$ contains a good base square. We call a square~$\sigma$
\emph{too small} when $\size(\sigma) < c_1 \cdot \diam(P^*_2)$  and \emph{too large} when
$\size(\sigma) > c_2\cdot\diam(P^*_2)$; otherwise we say that $\sigma$ has the \emph{correct size}.
Let $p,p'\in P^*_2$ be two points with $|pp'|=\diam(P^*_2)$, and consider a smallest square
$\sigma_{p,p'}$, in the compressed quadtree that contains both~$p$ and~$p'$.
Note that $\sigma_{p,p'}$ cannot be too small, since $c_1=1/4<1/\sqrt{2}$.
If $\sigma_{p,p'}$ has the correct size, then we are done since it is
a good base square of type~(B1). So now suppose $\sigma_{p,p'}$ is too large.

Let $\sigma_0,\sigma_1,\ldots,\sigma_k$ be the sequence of squares
in the recursive subdivision of $\sigma_{p,p'}$ that contain~$p$;
thus $\sigma_0=\sigma_{p,p'}$ and $\sigma_k$ is a square in the final subdivision.
Define $\sigma'_0,\sigma'_1,\ldots,\sigma'_{k'}$
similarly, but now for~$p'$ instead of~$p$. Suppose that none of these
squares has the correct size---otherwise we have a good base square of type~(B1).
There are three cases.
\begin{itemize}
\item \emph{Case~(i): $\sigma_k$ and $\sigma'_{k'}$ are too large.}
      \\[1mm]
      We claim that $\sigma_k$ touches $\sigma'_{k'}$. To see this, assume without loss
      of generality that $\size(\sigma_k)\leq\size(\sigma'_{k'})$. If
      $\sigma_k$ does not touch $\sigma'_{k'}$ then $|pp'| \geq \size(\sigma_k)$,
      which contradicts the assumption that $\sigma_k$ is too large.
      Hence, $\sigma_k$ indeed touches $\sigma'_{k'}$. But then we have a base square
      of type~(B4.1) for the pair $p,p'$ and since $|pp'|=\diam(P^*_2)$ this
      is a good base square.
\item \emph{Case~(ii): $\sigma_k$ and $\sigma'_{k'}$ are too small.}
      \\[1mm]
      In this case there are indices $0<j\leq k$ and $0<j'\leq k'$ such that
      $\sigma_{j-1}$ and $\sigma'_{j'-1}$ are too large and $\sigma_{j}$
      and $\sigma'_{j'}$ are too small.
      Note that this implies that both $\sigma_j$ and $\sigma'_{j'}$ result
      from a shrinking step, because $c_1 < c_2/2$ and so the quadrants
      of a too-large square cannot be too small.
      We claim that $\sigma_{j-1}$ touches $\sigma'_{j'-1}$. Indeed, similarly to Case~(i),
      if $\sigma_{j-1}$ and $\sigma'_{j'-1}$ do not touch
      then $|pp'| > \min(\size(\sigma_{j-1}),\size(\sigma'_{j'-1}))$,
      contradicting the assumption that both $\sigma_{j-1}$ and $\sigma'_{j'-1}$ are too large.
      We now have two subcases.
      \begin{itemize}
      \item The first subcase is that the donut region~$\sigma_{j-1}\setminus\sigma_{j}$
            touches the donut region~$\sigma'_{j'-1}\setminus\sigma_{j'}$. Thus a
            smallest enclosing square for $\sigma_j$ and $\sigma'_{j'}$ has been put
            into $\Sbase$ as a base square of type~(B4.2). Let $\sigma^*$ denote
            this square. Since the segment~$pp'$ is contained in $\sigma^*$ we have
            \[
            c_1\cdot \diam(P^*_2) \ < \ \diam(P^*_2)/ \sqrt{2} \ = \ |pp'| / \sqrt{2} \ \leq \  \size(\sigma^*).
            \]
            Furthermore, since $\sigma_j$ and $\sigma'_{j'}$ are too small
            we have
            \begin{align} \label{eq:first-subcase}
            \size(\sigma^*)
            \ & \leq \ \size(\sigma_j) + \size(\sigma'_{j'} ) + |pp'|
            \ \leq \ 3\cdot \diam(P^*_2)
            \\ & < \ c_2\cdot \diam(P^*_2), \nonumber
            \end{align}
            and so $\sigma^*$ is a good base square. 
      \item The second subcase is that $\sigma_{j-1}\setminus\sigma_{j}$
            does not touch~$\sigma'_{j'-1}\setminus\sigma_{j'}$. This can only
            happen if $\sigma_{j-1}$ and $\sigma'_{j'-1}$ just share a single corner, $v$.
            Observe that $\sigma_j$ must lie in the quadrant of $\sigma_{j-1}$ that
            has $v$ as a corner, otherwise $|pp'|\geq \size(\sigma_{j-1})/2$
            and $\sigma_{j-1}$ would not be too large.
            Similarly,  $\sigma'_{j'}$ must lie in the quadrant of $\sigma'_{j'-1}$ that
            has $v$ as a corner. Thus the base squares of type~(B3) for
            $\sigma_j$ and $\sigma'_{j'}$ both have $v$ as a corner. Take the largest
            of these two base squares, say~$\sigma_j$. For this square~$\sigma^*$ we have
            \[
            c_1\cdot \diam(P^*_2) \ < \ \diam(P^*_2)/2\sqrt{2} \ = \ |pp'| / 2\sqrt{2} \ \leq \ \size(\sigma^*),
            \]
            since $|pp'|$ is contained in a square of twice the size of $\sigma^*$.
            Furthermore, since $\sigma_j$ is too small and $|pv|<|pp'|$ we have
            \begin{equation} \label{eq:second-subcase}
            \size(\sigma^*)
            \ \leq \ \size(\sigma_j) + |pv|
            \ \leq \ (c_1+1)\cdot \diam(P^*_2)
            \ < \ c_2\cdot \diam(P^*_2).
            \end{equation}
            Hence, $\sigma^*$ is a good base square.
      \end{itemize}
\item \emph{Case~(iii): neither (i) nor (ii) applies.} \\[1mm]
      In this case $\sigma_k$ is too small and $\sigma'_{k'}$ is too large (or vice versa).
      Thus there must be an index $0<j\leq k$ such that
      $\sigma_{j-1}$ is too large and $\sigma_{j}$ is too small.
      We can now follow a similar reasoning as in Case~(ii):
      First we argue that $\sigma_j$ must have resulted from
      a shrinking step and that $\sigma_{j-1}$ touches~$\sigma'_{k'}$.
      Then we distinguish two subcases, namely where the
      donut region~$\sigma_j\setminus \sigma_{j-1}$ touches~$\sigma'_{k'}$
      and where it does not touch~$\sigma'_{k'}$.
      The arguments for the two subcases are similar to the subcases in Case~(ii),
      with the following modifications.
      In the first subcase we use base squares of type~(B4.3)
      and in \eqref{eq:first-subcase} the term $\size(\sigma'_{j'})$ disappears;
      in the second subcase we use a type~(B3) base square
      for $\sigma_{j}$ and a type~(B2) base square for $p'$,
      and when the base square for $p'$ is larger than the
      base square for $\sigma_{j}$ then \eqref{eq:second-subcase} becomes
      $\size(\sigma^*) \leq 2\;|p'v| < c_2\cdot \diam(P^*_2)$.
\end{itemize}

\end{proof}

\mypara{Step 2: Generating halfplanes.}
Consider a good square $\sigma \in S$. Let $Q_\sigma$ be a set of $4\cdot c^*/\csep+1=18001$ points placed equidistantly around the boundary of $\sigma$.
Note that the distance between two neighbouring points in $Q_\sigma$ is
less than $\csep/c^* \cdot \size(\sigma)$.
For each pair $q_1,q_2$
of points in $Q_\sigma$, add to $H_\sigma$ the two halfplanes defined
by the line through $q_1$ and $q_2$. 

\begin{lemma}\label{le:hyperplanes} For any good square $\sigma \in S$, there is a halfplane $h\in H_\sigma$ such that $P^*_2 = P(\sigma\cap h)$.
\end{lemma}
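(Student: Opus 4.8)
The plan is to reduce the statement to producing one good separating line through two points of $Q_\sigma$. It suffices to find a halfplane $h\in H_\sigma$ whose interior contains $\ch(P^*_2)$ and the interior of whose complement contains $\ch(P^*_1\cap\sigma)$: since $P^*_2\subseteq\sigma$ we have $P\cap\sigma=P^*_2\cup(P^*_1\cap\sigma)$, so intersecting with such an $h$ removes $P^*_1\cap\sigma$ and keeps all of $P^*_2$, giving $P(\sigma\cap h)=P^*_2$. Thus from here on the goal is only to strictly separate $\ch(P^*_2)$ from $\ch(P^*_1\cap\sigma)$ by a line through two points of $Q_\sigma$.

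Next I would pin down the scales. As $\sigma$ is good, $\size(\sigma)\le c^*\cdot\myper(P^*_2)$, and as $(P^*_1,P^*_2)$ is the partition with large separation distance, $d\mydef\distPQ(\ch(P^*_1),\ch(P^*_2))\ge\csep\cdot\myper(P^*_2)$; hence $d\ge(\csep/c^*)\cdot\size(\sigma)$, which is just above the spacing of the $4c^*/\csep+1$ equidistant points $Q_\sigma$ along $\bd\sigma$ (whose perimeter is $4\,\size(\sigma)$), so every point of $\bd\sigma$ lies within $d/2$ of a point of $Q_\sigma$. Moreover $\diam(\ch(P^*_2))\le\myper(P^*_2)/2\le d/(2\csep)$, so the diameter of $\ch(P^*_2)$, the separation distance $d$, and $\size(\sigma)$ are all within fixed constant factors of one another. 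Taking a shortest segment between $\ch(P^*_1)$ and $\ch(P^*_2)$ and letting $m$ be the line through its midpoint orthogonal to it, a standard argument with the two parallel supporting lines puts $\ch(P^*_2)$ on one side of $m$ at distance $\ge d/2$ and $\ch(P^*_1)$ on the other side at distance $\ge d/2$. If $m$ misses the interior of $\sigma$, then $\sigma$ — hence $P^*_1\cap\sigma$ — lies on the closed side of $m$ carrying $\ch(P^*_2)$, forcing $P^*_1\cap\sigma=\emptyset$; then any halfplane containing $\sigma$ works, e.g.\ the one bounded by the line through two points of $Q_\sigma$ lying on a common edge of $\sigma$.

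The remaining case, where $m$ crosses $\bd\sigma$ at two points $a\ne b$, is the heart of the matter and what I expect to be the main obstacle. The idea is to replace $m$ by the line $m'$ through two points $q_1,q_2\in Q_\sigma$ near $a$ and $b$, and to take for $h$ the side of $m'$ containing $\ch(P^*_2)$. This is not automatic: even with $q_1,q_2$ chosen within $d/2$ of $a,b$, the square $\sigma$ can protrude past $a$ and $b$, and in those protruding pieces near the corners of $\sigma$ the line $m'$ can drift away from $m$ by more than $d/2$, so a point of $\ch(P^*_2)$ or $\ch(P^*_1\cap\sigma)$ there might end up on the wrong side of $m'$. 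Controlling this requires the geometry of the square — a chord between two opposite edges already spans the full width of $\sigma$ in the corresponding direction, which bounds the protrusions — combined with the scale facts above (all relevant lengths are $\Theta(d)=\Theta(\size(\sigma))$), and a careful choice of $q_1,q_2$ (in particular, on which side of $m$ to round $a$ and $b$) so that $m'$ drifts toward, not away from, the hull it must keep on the far side. The complementary configuration, in which $m$ crosses two adjacent edges and clips off a corner $v$ of $\sigma$, is handled separately: the clipped region then has depth less than $d/2$, so it cannot contain $\ch(P^*_2)$ — whence $\ch(P^*_1\cap\sigma)$ is the part near $v$ — and one cuts it off using two points of $Q_\sigma$ flanking $v$. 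Making every branch of this case analysis go through, with all the drift/overhang estimates closing, is exactly what forces the constants $c^*=18$, $\csep=1/250$, and $|Q_\sigma|=18001$.
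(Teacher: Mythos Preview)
Your reduction to finding a line through two points of $Q_\sigma$ that strictly separates $\ch(P^*_2)$ from $P^*_1\cap\sigma$ is correct, and the case $P^*_1\cap\sigma=\emptyset$ matches the paper. In the main case, however, you work with a single separating line $m$ and then try to control how much the line through nearby grid points $q_1,q_2$ drifts from $m$. You correctly identify this as the obstacle, but you do not actually overcome it: the claims about ``careful choice'' of rounding direction, about the geometry of opposite-edge chords, and especially the assertion that in the corner-clipping case ``the clipped region then has depth less than $d/2$'' are not proved. As written this is a sketch of where the difficulty lies, not a proof.

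The missing idea, which makes the paper's argument a few lines long, is to replace the single line $m$ by the full open \emph{strip} $O$ of width at least $\csep\cdot\myper(P^*_2)\ge(\csep/c^*)\cdot\size(\sigma)$ that separates $P^*_1$ from $P^*_2$. Since $\sigma$ contains points on both sides of $O$, the set $\bd\sigma\cap O$ consists of two arcs $B_1,B_2$, each of length at least the strip width and hence strictly greater than the spacing of $Q_\sigma$; so each $B_i$ contains some $q_i\in Q_\sigma$. Because $q_1,q_2\in O$ and $O$ is convex, the entire segment $q_1q_2$---which is exactly the intersection of $\sigma$ with the line through $q_1$ and $q_2$---lies inside $O$. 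Thus inside $\sigma$ that line stays in the empty strip and separates $P^*_2$ from $P^*_1\cap\sigma$ automatically, with no drift or case analysis needed. The count $|Q_\sigma|=4c^*/\csep+1$ arises solely from ``spacing $<$ strip width'', not from closing a multi-branch estimate as you suggest.
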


First a remark: We do not claim that the line $\ell$ bounding the halfplane $h$ separates $P^*_1$ and $P^*_2$ globally, but only in $\sigma$---indeed, $\ell$ might intersect $\ch(P^*_1)$.

\begin{proof}
In the case where $\sigma\cap P^*_1=\emptyset$, two points in $Q_\sigma$ from the same
edge of $\sigma$ define a halfplane $h$ such that $P^*_2 = P(\sigma\cap h)$,
so assume that $\sigma$ contains one or more points from
$P^*_1$.

We know that the separation distance between $P^*_1$ and $P^*_2$ is at least $\csep \cdot \myper(P^*_2)$. 
Moreover, $\size(\sigma)\leq c^* \cdot \myper(P^*_2)$. 
Hence,
there is an empty open strip $O$ with a width of at least $\csep/c^* \cdot \size(\sigma)$
separating $P^*_2$ from $P^*_1$.
Since $\sigma$ contains a point from $P^*_1$, we know that
$\sigma\setminus O$ consists of two pieces and that
the part of the
boundary of $\sigma$ inside~$O$ consists of two disjoint portions $B_1$ and $B_2$
each of length at least $\csep/c^* \cdot \size(\sigma)$. Hence the sets
$B_1\cap Q_\sigma$ and $B_2\cap Q_\sigma$ contain points $q_1$ and $q_2$,
respectively, that define a halfplane $h$ as desired.
\end{proof}

\mypara{Step 3: Evaluating candidate solutions.}
In this step we need to compute for each pair $(\sigma,h)$ with $\sigma\in S$
and $h\in H_\sigma$, the value $\myper(P\setminus P(\sigma\cap h)) + \myper(P(\sigma\cap h))$.
Given a set $\mathcal O$ of $k$ orientations, Oh and Ahn~\cite{oa-18} described how to create a data structure using $O(nk^3\log^2 n)$ time and space to answer queries of the following type in time $O(k\log^2 n)$:
Given a convex polygon $Q$ where each edge has an orientation in~$\mathcal O$, what is $\myper(P\cap Q)$?
In our case, we need to compute the perimeter of the points in canonical $5$-gons and their complements, i.e., $\myper(P(\sigma\cap h))$ and $\myper(P\setminus P(\sigma\cap h))$ for a given pair $(\sigma,h)$.
Recall that the bounding lines of the halfplanes $h$ we must process have $O(1)$ different orientations.
For each such orientation $o$, we make an instance of the data structure of Oh and Ahn which has as orientations $\mathcal O$ the two axis-parallel directions and $o$.
We can then clearly compute $\myper(P(\sigma\cap h))$ in time~$O(\log^2 n)$.
Note that the complement $P\setminus P(\sigma\cap h)$ is the disjoint union of the points in four axis-parallel rectangles and the complementary canonical $5$-gon $\sigma\setminus(\sigma\cap h)$.
For each of these four rectangles and the $5$-gon, we can compute the convex hull of the points inside it in $O(\log^2 n)$ time, using the data structure of Oh and Ahn~\cite{oa-18}.
This gives us five convex hulls, represented as balanced trees.
We can then compute, for each pair of convex hulls, the outer common tangents in $O(\log n)$ time~\cite[Lemma 3]{oa-18}, from which we can compute the overall convex hull and its perimeter.
The total time to compute $\myper( P \setminus (P(\sigma\cap h))$ is thus likewise $O(\log^2 n)$.

We thus obtain the following result, which finishes the proof of Theorem~\ref{th:main}.
\begin{lemma}\label{le:step3prime}
Step 3 can be performed in $O(n\log^2 n)$ time and space.
\end{lemma}

\section{The approximation algorithm}\label{se:approx-alg}

\begin{theorem}\label{thm:approx}
Let $P$ be a set of $n$ points in the plane and let $(P_1^*,P_2^*)$ be a partition of~$P$
minimizing $\per{P_1^*}+\per{P_2^*}$. Suppose we have an exact algorithm
for the minimum perimeter-sum problem running in $T(k)$ time for instances with $k$ points. Then for any given $\eps>0$
we can compute a partition $(P_1,P_2)$ of $P$ such that
$\per{P_1}+\per{P_2} \leq (1+\eps) \cdot  \big(\per{P_1^*}+\per{P_2^*}\big)$ in $O(n+T(1/\eps^2))$ time.
\end{theorem}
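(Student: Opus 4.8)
The plan is to reduce the $n$-point instance to one with $O(1/\eps^2)$ points by a grid-rounding (coarsening) argument, run the exact algorithm on the small instance, and then lift the computed partition back to~$P$. First I would compute, in $O(n)$ time, an axis-aligned bounding box of~$P$ together with a crude estimate $D$ of $\diam(P)$; since $\myper(P_1^*)+\myper(P_2^*) \geq \per{\ch(P)} \geq 2\diam(P)$, the quantity $D$ gives a lower bound on the optimum up to a constant factor. I would then lay down a square grid of cell size roughly $\Theta(\eps D)$ over the bounding box and, for each nonempty cell, keep a single representative point (say the one closest to the cell center, or an arbitrary one). This yields a point set $P'$ of size $O(1/\eps^2)$, because all of $P$ lies in an $O(D)\times O(D)$ box and hence meets only $O(1/\eps^2)$ cells. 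Computing $P'$ takes $O(n)$ time (bucket the points into cells via the floor function on the real RAM). Run the exact algorithm on $P'$ in $T(1/\eps^2)$ time to obtain an optimal partition $(P_1',P_2')$; this induces a partition $(P_1,P_2)$ of $P$ by assigning each original point to the side of the representative of its cell. Outputting this and its cost takes $O(n)$ more time.

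The correctness argument has two directions. For the \emph{upper bound}, I would take the optimal partition $(P_1^*,P_2^*)$ of~$P$ and project it to $P'$: each representative goes to the side of (one of) the original points in its cell; call the resulting partition of $P'$ some $(Q_1,Q_2)$. Moving each point by at most $\sqrt2$ cell widths $=O(\eps D)$ changes the convex hull perimeter by an additive $O(\eps D)$ per hull (standard: translating/perturbing all vertices of a convex polygon by at most $\delta$ changes its perimeter by at most, say, $2\pi\delta$, or one can bound it by a constant times the number of "active" hull edges times $\delta$ — but to get a clean bound one uses that the perimeter is $1$-Lipschitz in Hausdorff distance only up to the polygon's description, so it is cleaner to argue via the outer/inner parallel bodies). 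Hence $\per{\ch(Q_1)}+\per{\ch(Q_2)} \leq \per{P_1^*}+\per{P_2^*} + O(\eps D)$, and since the exact algorithm on $P'$ does at least as well as $(Q_1,Q_2)$, the computed $(P_1',P_2')$ has cost at most $\per{P_1^*}+\per{P_2^*}+O(\eps D)$. For the \emph{lower-bound / lifting} direction, the partition $(P_1,P_2)$ of $P$ obtained from $(P_1',P_2')$ has $\ch(P_i)$ within Hausdorff distance $O(\eps D)$ of $\ch(P_i')$, so $\per{P_1}+\per{P_2} \leq \per{\ch(P_1')}+\per{\ch(P_2')} + O(\eps D) \leq \per{P_1^*}+\per{P_2^*}+O(\eps D)$. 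Finally, since $\per{P_1^*}+\per{P_2^*} \geq 2\diam(P) = \Omega(D)$, the additive error $O(\eps D)$ is at most $O(\eps)\cdot(\per{P_1^*}+\per{P_2^*})$; rescaling $\eps$ by the hidden constant (i.e.\ choosing grid size $c\eps D$ for a suitable small constant $c$) turns this into the desired $(1+\eps)$ multiplicative guarantee, while only affecting the $1/\eps^2$ by a constant factor.

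The main technical obstacle is controlling how much a convex-hull perimeter changes when every point is perturbed by at most $\delta$ \emph{and} points may switch sides of the partition is \emph{not} an issue — the side assignment is inherited, so no point switches. The real care is needed in the perimeter-perturbation lemma: one must show $|\per{\ch(X)} - \per{\ch(Y)}| = O(\delta)$ when $X$ and $Y$ are related by a bijection moving each point by $\leq\delta$, with the constant \emph{independent of the number of points}. This is true because $\ch(Y) \subseteq \ch(X) \oplus B(0,\delta)$ and vice versa (Minkowski sum with a disk of radius $\delta$), and the perimeter of $K \oplus B(0,\delta)$ is $\per{K} + 2\pi\delta$ by the classical formula for parallel bodies, combined with monotonicity of perimeter for nested convex bodies; hence $|\per{\ch(X)}-\per{\ch(Y)}| \leq 2\pi\delta$. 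A secondary point to handle cleanly is the degenerate case where $P_2^*$ (or $Q_2$) becomes empty after coarsening, i.e.\ all of $P$ falls into cells on one side — but this cannot happen for the \emph{optimal} small partition since the exact algorithm always returns two nonempty parts with disjoint hulls, and for the projected partition $(Q_1,Q_2)$ one notes both $P_1^*$ and $P_2^*$ are nonempty so each contributes at least its cell's representative; if two points on opposite sides happen to share a cell, pick that cell's representative to lie on, say, the $P_2^*$ side, which is harmless for the upper bound. All remaining steps are $O(n)$ bucketing plus one call to the exact algorithm, giving the claimed $O(n + T(1/\eps^2))$ running time.
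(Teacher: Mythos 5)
Your grid-rounding reduction is the same as the paper's Step~2, including the Minkowski-sum/Steiner-formula argument that bounds the perimeter change under a small perturbation of the point set, and the lifting of the small-instance partition back to~$P$. But your proof has a genuine gap: you use the inequality
$\myper(P_1^*)+\myper(P_2^*) \geq \myper(\ch(P)) \geq 2\diam(P)$
to argue that the additive error $O(\eps D)$ from rounding is at most $\eps\cdot(\myper(P_1^*)+\myper(P_2^*))$. The \emph{first} of those two inequalities is simply false. The perimeter of a convex hull of a union of two disjoint convex sets can be \emph{larger} than the sum of the two perimeters, not smaller --- for example, take $P$ to be two points at distance~$D$, for which $\myper(P_1^*)+\myper(P_2^*)=0$ yet $\myper(\ch(P))=2D$; or take two tight clusters of points of diameter~$\eta\ll D$ at mutual distance~$D$, for which the optimum is $O(\eta)$ while $\diam(P)\approx D$. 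So your additive bound $O(\eps D)$ does \emph{not} in general translate to a multiplicative $(1+\eps)$ guarantee, and the scaling-down-the-grid fix you invoke at the end does not help: it only shrinks the hidden constant in $O(\eps D)$, not its dependence on~$D$.

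What is missing is exactly the case distinction the paper makes in its Step~1: before grid-rounding, it tests whether the optimal cost is small compared to the bounding-box width~$w$ (specifically, whether the middle two of four vertical strips, and if necessary the middle two of four horizontal strips, are empty). If so, the point set decomposes into two well-separated corner groups and the paper outputs that split \emph{exactly}; only when that test fails, so that $\myper(P_1^*)+\myper(P_2^*)>w/16$, does it run grid-rounding, and there the additive error $O(\eps w)$ really is $O(\eps)\cdot(\myper(P_1^*)+\myper(P_2^*))$. You would need to add such a ``small-optimum'' detection step (or, alternatively, prove directly that when the optimum is $o(D)$ the clusters are tight and far apart, the rounded instance has well-separated representative clusters, the exact algorithm on $P'$ must pick the cluster split, and the lifted partition therefore recovers $(P_1^*,P_2^*)$ exactly --- an argument which is true but which you have not made). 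Without one of these, the claimed $(1+\eps)$ bound does not follow from what you wrote.

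One smaller remark: your ``upper-bound direction'' projects $(P_1^*,P_2^*)$ onto $P'$ and pays an extra $O(\eps D)$; the paper avoids this loss entirely by noting that $\widehat P\subseteq P$, so the optimal partition of $\widehat P$ costs at most the cost of $(P_1^*\cap\widehat P, P_2^*\cap\widehat P)\leq \myper(P_1^*)+\myper(P_2^*)$. This is cleaner but not where your gap lies.
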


\begin{proof}
Consider the axis-parallel bounding box $B$ of $P$. Let $w$ be the width of~$B$ and let
$h$ be its height. Assume without loss of generality that $w\geq h$.
Our algorithm works in two steps.
\begin{itemize}
\item \emph{Step 1: Check if $\per{P_1^*}+\per{P_2^*}\leq w/16$. If so, compute the exact solution.} \\[2mm]
    We partition $B$ vertically into four strips with width~$w/4$, denoted $B_1$, $B_2$, $B_3$,
    and $B_4$ from left to right. If $B_2$ or $B_3$ contains a point from $P$,
    we have $\per{P_1^*}+\per{P_2^*}\geq w/2>w/16$ and we go to Step~2.
    If $B_2$ and $B_3$ are both empty, we consider two cases.
    \begin{itemize}
    \item \emph{Case~(i): $h\leq w/8$.} \\[2mm]
         In this case we simply return the partition $(P\cap B_1,P\cap B_4)$.
        To see that this is optimal, we first note that any subset $P'\subset P$
        that contains a point from $B_1$ as well as a point from $B_4$
        has $\per{P'}\geq 2\cdot (3w/4)=3w/2$. On the other hand,
        $\per{P\cap B_1}+\per{P\cap B_4} \leq 2\cdot (w/2+ 2h) \leq 3w/2$.
    \item \emph{Case~(ii): $h> w/8$.} \\[2mm]
        We partition $B$ horizontally into four rows with height~$h/4$,
        numbered $R_1$, $R_2$, $R_3$, and $R_4$ from bottom to top.
        If $R_2$ or $R_3$ contains a point from $P$, we have
        $\per{P_1^*}+\per{P_2^*}\geq h/2>w/16$, and we go to Step~2.
        If $R_2$ and $R_3$ are both empty, we overlay the vertical and
        the horizontal partitioning of $B$ to get a
        $4\times 4$ grid of cells $C_{ij}\mydef B_i\cap R_j$ for $i,j\in\{1,\ldots,4\}$.
        We know that only the corner cells $C_{11},C_{14},C_{41},C_{44}$
        contain points from $P$. If three or four corner cells are non-empty,
        $\per{P_1^*}+\per{P_2^*}\geq 6h/4>w/16$, and we go to Step~2. Hence, we may without loss of
        generality assume that any point of $P$ is in $C_{11}$ or $C_{44}$.
        We now return the partition $(P\cap C_{11},P\cap C_{44})$, which
        is easily seen to be optimal.
    \end{itemize}
\item \emph{Step 2: Handle the case where $\per{P_1^*}+\per{P_2^*}> w/16$.} \\[2mm]
    The idea is to compute a subset~$\widehat{P}\subset P$ of size $O(1/\eps^2)$
    such that an exact solution to the minimum perimeter-sum problem on~$\widehat{P}$
    can be used to obtain a $(1+\eps)$-approximation for the problem on~$P$.

    We subdivide $B$ into $O(1/\eps^2)$ rectangular cells of width and height
    at most $c\mydef \eps w / (64\pi\sqrt 2)$.
    For each cell $C$ where $P\cap C$ is non-empty we pick an arbitrary point
    in $P\cap C$, and we let $\widehat P$ be the set of selected points.
    For a point $p\in \widehat P$, let $C(p)$ be the cell containing $p$.
    Intuitively, each point $p\in \widehat P$ represents all the points $P\cap C(p)$.
    Let $(\widehat P_1,\widehat P_2)$ be a partition of
    $\widehat P$ that minimizes $\per{\widehat P_1}+\per{\widehat P_2}$.
    We assume we have an algorithm that can compute such an optimal partition
    in $T(|\widehat{P}|)$ time. For $i=1,2$, define
    \[
    P_i\mydef \bigcup_{p\in \widehat P_i} P\cap C(p).
    \]
    Our approximation algorithm returns the partition $(P_1,P_2)$.
    (Note that the convex hulls of $P_1$ and $P_2$ are not necessarily disjoint.)
    It remains to prove the approximation ratio.
\begin{figure}
\begin{center}
\includegraphics{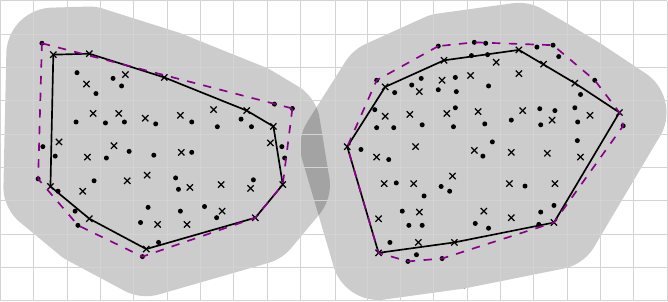}
\end{center}
\caption{The crossed points are the points of $\widehat P$.
The left gray region is $\widetilde P_1$ and the right
gray region is $\widetilde P_2$. The left dashed polygon is the convex hull of $P_1$
and the right dashed polygon is the convex hull of $P_2$.}
\label{fi:coreset}
\end{figure}

    First, note that $\per{\widehat P_1}+\per{\widehat P_2}\leq\per{P_1^*}+\per{P_2^*}$ since
    $\widehat P\subseteq P$.
    For $i=1,2$, let $\widetilde P_i$ consist of all points in the plane (not only points
    in $P$) within a distance of at most $c\sqrt{2}$ from $\ch(\widehat P_i)$.
    In other words, $\widetilde P_i$ is the Minkowksi sum of $\ch(\widehat P_i)$
    with a disk $D$ of radius $c\sqrt{2}$ centered at the origin;
    see Fig.~\ref{fi:coreset}.
    Note that if
    $p\in \widehat P_i$, then $q\in\widetilde P_i$ for any $q\in P\cap C(p)$,
    since any two points in $C(p)$ are at most $c\sqrt{2}$ apart from each other.
    Therefore $P_i\subset \widetilde P_i$ and hence $\per{P_i}\leq \per{\widetilde P_i}$.
    Note also that
    $\per{\widetilde P_i}=\per{\widehat P_i}+2 c\pi\sqrt 2$.
    These observations yield
    \[
    \begin{array}{lll}
    \per{P_1}+\per{P_2}
        & \leq  & \per{\widetilde P_1}+\per{\widetilde P_2} \\
        &   =   & \per{\widehat P_1}+\per{\widehat P_2} + 4 c\pi\sqrt 2 \\
        & \leq  & \per{P_1^*}+\per{P_2^*} + 4 c\pi\sqrt 2 \\
        &  =    & \per{P_1^*}+\per{P_2^*} + 4 \pi\sqrt 2 \cdot \left( \eps w / (64\pi\sqrt 2) \right) \\
        & \leq  & \per{P_1^*}+\per{P_2^*} + \eps w/16 \\
        & \leq  & (1+\eps) \cdot (\per{P_1^*}+\per{P_2^*}).
    \end{array}
    \]
   As all the steps can be done in linear time,
   the time complexity of the algorithm is $O(n+T(n_\eps))$
   for some $n_\eps=O(1/\eps^2)$.
\end{itemize}
\end{proof}

\section{Concluding remarks}

We note that in the exact algorithm, for each of the $O(n)$ base squares $\sigma\in S$, the number of values $\myper(P(\sigma\cap h))$ that we query is approximately $2\cdot {4\choose 2}\cdot 4500^2\approx 2.4\cdot 10^8$.
Although it is surely possible to modify the algorithm to get smaller constants (to which we made no attempt), we expect that the algorithm will remain impractical.

Consider the degenerate case where all the input points $P$ are on the $x$-axis.
Then the minimum-perimeter sum problem reduces to the well-known maximum-gap problem, where the goal is to find the largest difference between two consecutive numbers in sorted order.
Lee and Wu~\cite{lw-86} gave a lower bound of $\Omega(n\log n)$ for that problem in the algebraic computation tree model, which therefore also holds for the minimum-perimeter sum problem in that model.

The question by Mitchell and Wynters~\cite{mw-91} about the existence of sub-quadratic algorithms for the minimum-perimeter maximum, minimum-area sum, and minimum-area maximum problems remain interesting open problems.
To our knowledge, the only published algorithm for any of these problems is the $O(n^4\log n)$-time algorithm by Bae \etal~\cite{bcess-2016} for the minimum-area sum problem, since the algorithms by Mitchell and Wynters consider line partitions only.

\subparagraph*{Acknowledgements}
This research was initiated when the first author visited
the Department of Computer Science
at TU Eindhoven during the winter 2015--2016.
He wishes to express his
gratitude to the other authors and the department for their
hospitality.

\bibliographystyle{plain}

\newcommand{\cgta}{\emph{Comput.\ Geom.\ Theory Appl.}}
\newcommand{\dcg}{\emph{Discr.\ Comput.\ Geom.}}
\newcommand{\ijcga}{\emph{Int.\ J.\ Comput.\ Geom.\ Appl.}}
\newcommand{\alg}{\emph{Algorithmica}}
\newcommand{\jalg}{\emph{J.\ Alg.}}
\newcommand{\ipl}{\emph{Inf.\ Proc.\ Lett.}}

\newcommand{\socg}[1]{In \emph{Proc.\ #1 ACM Symp.\ Comput.\ Geom.\ (SoCG)}}
\newcommand{\soda}[1]{In \emph{Proc.\ #1 ACM-SIAM Symp.\ Discr.\ Alg.\ (SODA)}}
\newcommand{\esa}[1]{In \emph{Proc.\ #1 Europ.\ Symp.\ Alg.\ (ESA)}}
\newcommand{\walcom}[1]{In \emph{Proc.\ #1 Int.\ Workshop Alg.\ Comput.\ (WALCOM)}}
\newcommand{\cccg}[1]{In \emph{Proc.\ #1 Canad.\ Conf.\ Comput.\ Geom.\ (CCCG)}}
\newcommand{\wads}[1]{In \emph{Proc.\ #1 Workshop Alg.\ Data Struct.\ (WADS)}}
\newcommand{\cats}[1]{In \emph{Proc.\ #1 Comput.: Australas.\ Theory Symp.\ (CATS)}}
\newcommand{\cocoon}[1]{In \emph{Proc.\ #1 Int.\ Comput.\ Comb.\ Conf.\ (COCOON)}}
\newcommand{\stoc}[1]{In \emph{Proc.\ #1 Annu.\ ACM SIGACT Symp.\ Theory Comput.\ (STOC)}}

\end{document}